\DeclareSymbolFont{symbols}{OMS}{cmsy}{m}{n}
\newcommand{\myrtimes}{\rtimes}
\newtheorem{thm}{Theorem}[section]
\newtheorem{cor}[thm]{Corollary}
\newtheorem{prop}[thm]{Proposition}
\theoremstyle{definition}
\newtheorem{defi}[thm]{Definition}
\newtheorem{conjecture}[thm]{Conjecture}
\newtheorem{question}[thm]{Question}
\newtheorem{example}[thm]{Example}
\theoremstyle{remark}
\newtheorem{rmk}[thm]{Remark}
\newcommand{\Hil}{\mathcal{H}}
\newcommand{\Spin}{\mathop{\mathsf{Spin}}}
\newcommand{\SO}{\mathop{\mathsf{SO}}}
\newcommand{\SU}{\mathop{\mathsf{SU}}}
\newcommand{\Gtwo}{\mathop{\mathsf{G}_2}}
\newcommand{\Ffour}{\mathop{\mathsf{F}_4}}
\newcommand{\grE}{\mathop{\mathsf{E}}}
\newcommand{\Mob}{\mathsf{M\ddot ob}}
\newcommand{\Sc}[1][]{\mathbb{S}^{1#1}}
\newcommand{\C}{\mathcal{C}}
\newcommand{\cF}{\mathcal{F}}
\newcommand{\cB}{\mathcal{B}}
\renewcommand{\k}{\mathcal{k}}
\newcommand{\cD}{\mathcal{D}}
\newcommand{\cC}{\mathcal{C}}
\newcommand{\A}{\mathcal{A}}
\newcommand{\cI}{\mathcal{I}}
\newcommand{\M}{\mathcal{M}}
\newcommand{\N}{\mathcal{N}}
\newcommand{\cG}{\mathcal{G}}
\newcommand{\CC}{\mathbb{C}}
\newcommand{\ZZ}{\mathbb{Z}}
\newcommand{\NN}{\mathbb{N}}
\DeclareMathOperator{\End}{End}
\DeclareMathOperator{\DHR}{DHR}
\DeclareMathOperator{\Rep}{Rep}
\DeclareMathOperator{\Hom}{Hom}
\DeclareMathOperator{\Mor}{Mor}
\DeclareMathOperator{\Vir}{Vir}
\DeclareMathOperator{\Ad}{Ad}
\DeclareMathOperator{\id}{id}
\DeclareMathOperator{\B}{B}
\DeclareMathOperator*{\Dim}{Dim}
\DeclareMathOperator*{\Mod}{Mod}
\DeclareMathOperator*{\Hilb}{Hilb}
\DeclareMathOperator{\Irr}{Irr}
\newcommand{\e}{\mathrm{e}}
\newcommand{\ima}{\mathrm{i}}
\newcommand{\op}{\mathrm{op}}
\DeclareRobustCommand{\eg}{e.g.\@\xspace}
\DeclareRobustCommand{\cf}{cf.\@\xspace}
\DeclareRobustCommand{\ie}{i.e.\@\xspace}
\DeclareRobustCommand{\etc}{%
    \@ifnextchar{.}%
        {etc}%
        {etc.\@\xspace}%
}
\newcommand{\Cstar}{$C^\ast$\@\xspace}
\newcommand{\bim}[4][]{{}\prescript{\vphantom{#1}}{#2}{#3}^{#1}_{#4}}
\newcommand{\LR}{\mathrm{LR}}
\renewcommand{\N}{N}
\renewcommand{\M}{M}
\newcommand{\ADEe}{$A$-$D_{2n}$-$E_{6,8}$\@\xspace}
\newcommand{\rev}[1]{{#1}^\mathrm{rev}}
\begin{document}
\date{\today}
\dateposted{\today}
\newcommand{\mytitle}{%
A Remark on 
CFT Realization of Quantum Doubles of Subfactors. Case Index $<4$%
}
\title{\mytitle}
\curraddr{Vanderbilt University, Department of Mathematics, 1326 Stevenson
Center, Nashville, TN 37240, USA}
\author{Marcel Bischoff}
\email{marcel.bischoff@vanderbilt.edu}
\thanks{Supported by NSF Grant DMS-1362138}
\begin{abstract}
It is well-known that the quantum double  $D(N\subset M)$  of a finite depth subfactor $N\subset M$, or equivalently the Drinfeld center of the even part fusion category, is a unitary modular tensor category. Thus, it should arise in conformal field theory.
We show that for every subfactor $N\subset M$ with index $[M:N]<4$ the quantum double $D(N\subset M)$ is realized as the representation category of a completely rational conformal net.
In particular, the quantum double of $E_6$ can be realized as a $\ZZ_2$--simple current 
extension of $\SU(2)_{10}\times \Spin(11)_1$ and thus is not exotic in any sense. 
As a byproduct, we obtain a vertex operator algebra for every such subfactor.

We obtain the result by showing that if a subfactor $N\subset M $ arises from $\alpha$--induction 
of completely rational nets $\A\subset \cB$ and there is a net $\tilde \A$ 
with the opposite braiding, then the quantum $D(N\subset M)$ is realized by 
completely rational net. We construct completely rational nets with the opposite braiding of  $\SU(2)_k$
and use the well-known fact that all subfactors with index $[M:N]<4$ arise by
$\alpha$-induction from  $\SU(2)_k$.
\end{abstract}

\maketitle

\setcounter{tocdepth}{3}
\setcounter{tocdepth}{1}
\tableofcontents

\newcommand{\bs}{$\backslash$}
\renewcommand{\L}{\mathrm{L}}

\newcommand{\tikzmath}[2][0.50]
{\vcenter{\hbox{\begin{tikzpicture}[scale=#1] #2\end{tikzpicture}}}
}
\newcommand{\colM}{black!20}
\newcommand{\colMa}{orange!50}
\newcommand{\colMab}{green!50}
\newcommand{\colMb}{red!30}
\renewcommand{\colMa}{black!20} \renewcommand{\colMab}{black!35} \renewcommand{\colMb}{black!30}
\newcommand{\colN}{black!10}
\newcommand{\mydot}[1]{\begin{scope}[shift={#1}] \fill[shift only] (0,0) circle (1.5pt); \end{scope}}

\section{Introduction}
A unitary fusion category can be seen as the generalization of 
a finite group $G$, which is neither assumed to be commutative nor co-commutative. In particular, the easiest examples are 
the category $\Rep(G)$ of unitary representation
of a finite group $G$ and the category $\Hilb_G$ of $G$-graded finite dimensional Hilbert spaces.
Note that $G$ is co-commutative in the sense that  
$\Rep(G)$ is commutative, while $G$ is in general non-commutative.

A factor is a von Neumann algebra with trivial center and a rather boring object. On the other hand a
subfactor, an inclusion $N\subset M$ of a factor $N$ into another, turns often out to be a really interesting object.
For example subfactor obtained by taking a fixed point with respect to a free action of a finite group
$N=M^G\subset M$ gives $\bim M\cF M\cong\Hilb_G$ and $\bim N \cF N\cong \Rep(G)$.
In general, a finite depth subfactor $N\subset M$ gives two unitary fusion categories $\bim N\cF N$ and $\bim M\cF M$ 
which are (higher) Morita equivalent.

Conversely, having a unitary fusion category $\cF$, there is a subfactor $N\subset M$, such that 
$\bim N\cF N \cong \bim M \cF M \cong \cF$. 
An important invariant \cite{Jo1983} is the index $[M:N]$ 
of a subfactor, which by Jones' index theorem takes values in:
$$
  [M:N]\in \left\{ 4\cos^2\left(\frac{\pi}{m}\right) : m=3,4,\ldots\right\}\cup [4:\infty].
$$
Another invariant is a pair of graphs, the principle and dual principal graphs, which are bipartite graphs. For 
index $[M:N]<4$ they are given by \ADEe Dynkin diagrams, 
where the index is related to the Coxeter number $m$ of the graph by $[M:N]=4\cos^2(\pi/m)$.

A unitary braided fusion category 
is a unitary fusion category with a braiding A braiding is a natural family of 
unitaries $\varepsilon(\rho,\sigma)\in\Hom(\rho\otimes \sigma,\sigma \otimes \rho)$.
Braided categories give a representation of the $n$-strand braid groups $B_n=\langle
e_1,\ldots,e_{n-1} : e_{i+1}e_ie_{i+1}=e_ie_{i+1}e_i,~e_ie_j=e_je_i \text{ if }|i-j|\geq2\rangle$ on $\Hom(\rho^{\otimes n},\rho^{\otimes n})$.
If $\varepsilon(\rho,\sigma)\varepsilon(\sigma,\rho)=1_{\sigma\otimes\rho}$ for all objects $\sigma,\rho$, it is called a symmetric fusion category. In this case the representations of the braid group are actually representations of the symmetric group.
On the other hand, in a unitary modular tensor category (UMTC)
the braiding is non-degenerated, in the sense that if $\varepsilon(\rho,\sigma)\varepsilon(\sigma,\rho)=1_{\sigma\otimes\rho}$
for all $\rho$, then $\sigma$ is a direct sum of the trivial object. 

Simple examples of UMTCs $\cC$ are the one where every irreducible object 
is invertible (has dimension 1). Then the fusion rules form an abelian group $A$
and $\cC$ is characterized by a non-degenerated quadratic form on $A$. 

The Drinfeld center of a UFC $\cF$, or the quantum double of a finite depth subfactor $N\subset M$, which equals the Drinfeld center $Z(\cF)$ of either of its fusion categories
$\cF\in\{\bim N\cF N, \bim M \cF M\}$, is a unitary modular tensor category \cite{Mg2003II}.

A coordinate version of modular tensor categories were invented by Moore and Seiberg \cite{MoSe1990} to axiomatize (the topological behaviour of) conformal field theories. Braided tensor categories also appeared in 
algebraic quantum field theory \cite{FrReSc1989} and UMTCs and their structure were 
analyzed by Rehren in \cite{Re1989}.
There are two axiomatizations for chiral CFT: vertex operator algebras (VOAs) and conformal nets and in both approaches the representation theory gives under certain sufficient conditions a (unitary) modular tensor category. 

The natural question arises, if all modular tensor categories arise as representation categories 
of chiral CFT.
A subquestion is if the quantum double of subfactors or equivalently Drinfeld centers of unitary fusion categories arise in this way.

We want to discuss such a question in the framework of conformal nets, which is naturally related to the study of subfactors. 
More precisely, if $\A$ is a completely rational conformal net, then the category of Doplicher--Haag--Roberts representions $\Rep(\A)$ 
is a unitary modular tensor category (UMTC) by \cite{KaLoMg2001}. 

We vaguely conjecture that the following is true.
\begin{conjecture} 
  \label{conj:1}
  Let $\cC$ be a unitary modular tensor category (UMTC), then there is at completely rational conformal net 
  $\A$ with $\Rep(\A)\cong\cC$.
\end{conjecture}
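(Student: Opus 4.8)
We do not prove Conjecture~\ref{conj:1}; below is the program of which the present paper establishes the sub-case of quantum doubles of subfactors of index $<4$, together with an indication of where the real difficulty lies. The guiding observation is that for any UMTC $\cC$ one has a braided equivalence $Z(\cC)\simeq\cC\boxtimes\rev{\cC}$, and that $Z(\cF)$ for a unitary fusion category $\cF$ is a Drinfeld center obtainable from a finite-depth subfactor. So the plan is to (a) realize Drinfeld centers $Z(\cF)$ by completely rational nets via the $\alpha$-induction mechanism of this paper, and then (b) split off a factor $\rev{\cC}$ from $\cC\boxtimes\rev{\cC}\simeq Z(\cC)$ by a coset argument to recover $\cC$ itself.

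Call a UMTC \emph{realizable} if it is $\Rep(\A)$ for a completely rational net $\A$. Several closure properties are available and should be assembled first: realizable UMTCs are closed under Deligne products $\boxtimes$ (tensor product of nets); under passage to the coset (relative commutant) of a conformal inclusion, whose representation category is controlled by the M\"uger centralizer; under Longo--Rehren and simple-current extensions; and, crucially, by the construction of this paper, if a finite-depth subfactor $N\subset M$ arises by $\alpha$-induction from an inclusion $\A\subset\cB$ of completely rational nets and $\rev{\Rep(\A)}$ is realizable, then $D(N\subset M)=Z(\bim N \cF N)$ is realizable. One then wants a stock $\mathcal G$ of manifestly realizable UMTCs to feed this machinery: the loop-group nets $\SU(n)_k$ and relatives, the lattice/$\U(1)$-current nets (which already give all pointed UMTCs, since every finite nondegenerate quadratic form is the discriminant form of a positive-definite even lattice), and their cosets.

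The key steps, in order, would be:
\begin{enumerate}
\item For each net $\A$ that will occur, construct a completely rational $\tilde\A$ with $\Rep(\tilde\A)\cong\rev{\Rep(\A)}$, \eg by realizing $\rev{\Rep(\A)}$ as a coset or simple-current extension inside a larger, better-understood net. (For $\A=\SU(2)_k$ this is carried out below.)
\item Given a UMTC $\cC$, use $\mathcal G$ and the closure operations to present the underlying fusion categories ``conformally'', \ie as $\bim N \cF N\cong\bim M \cF M$ for a subfactor $N\subset M$ arising by $\alpha$-induction from realizable nets $\A\subset\cB$.
\item Feed $(\A\subset\cB,\tilde\A)$ into the construction of this paper to realize $Z(\bim N \cF N)$, then combine Step~1 applied to $\cC$ with a coset argument to split $\cC\boxtimes\rev{\cC}\simeq Z(\cC)$ and obtain $\cC$.
\item Verify that Steps~1--3 can be carried out for \emph{every} UMTC, not just quantum-group-related ones.
\end{enumerate}

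The main obstacle is Step~4, which is genuinely open. There is no known way to present an arbitrary unitary fusion category through $\alpha$-induction of a \emph{rational} conformal net (Step~2), nor, in general, to realize a reversed braiding (Step~1): every current instance --- loop groups, cosets, orbifolds, simple-current and Longo--Rehren extensions, quantum subgroups of $\SU(n)$, and the index-$<4$ case treated below --- rests on a short list of well-understood chiral CFTs, and it is consistent with present knowledge that these, even under all the closure operations above, fail to generate every UMTC. A more structural attack would decompose $\cC$ in the Witt group of UMTCs into a lattice part and a completely anisotropic part, but the Witt group is itself unclassified, so this merely relocates the problem. Settling the pointed case, then the near-group and small-index cases, and bootstrapping through M\"uger centralizers, cosets and the $\alpha$-induction-plus-opposite-braiding machinery of this paper is a plausible route; closing the gap to all UMTCs is what remains conjectural.
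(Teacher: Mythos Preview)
The paper offers no proof of Conjecture~\ref{conj:1}; it is stated explicitly as a conjecture and left open. You correctly acknowledge this and sketch a program rather than a proof, so in that respect your proposal is aligned with the paper.

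That said, your program contains a genuine circularity in Step~3. You propose to realize $Z(\cC)\simeq\cC\boxtimes\rev{\cC}$ by a net and then ``combine Step~1 applied to $\cC$ with a coset argument'' to isolate $\cC$. But Step~1 applied to $\cC$ means: given a net $\A$ with $\Rep(\A)\cong\cC$, construct $\tilde\A$ with $\Rep(\tilde\A)\cong\rev{\cC}$. This presupposes exactly the net whose existence is in question. Even granting a net $\cB$ with $\Rep(\cB)\cong\cC\boxtimes\rev{\cC}$, nothing in the paper (or in the literature) lets you extract a subnet realizing the tensor factor $\cC$: Proposition~\ref{prop:Mirror} runs the other direction, deducing $\Rep(\A^{\mathrm c})\cong\rev{\Rep(\A)}$ \emph{from} a given normal cofinite inclusion $\A\subset\cB$ with $\cB$ holomorphic, not producing such an $\A$ from an abstract Deligne-product splitting of $\Rep(\cB)$. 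So the ``split off $\rev{\cC}$ by a coset'' step is unavailable without already possessing half of what you want.

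The paper is deliberately more modest in scope: it does not propose a route to Conjecture~\ref{conj:1} at all, but targets only Conjecture~\ref{conj:2}, and only for index $<4$, by exhibiting for each $\A_k$ a concrete coset net $\tilde\A_k$ with reversed braiding (Proposition~\ref{prop:Akmirror}) and feeding this into the $\alpha$-induction machine (Proposition~\ref{prop:GaloisDouble}). Your Steps~2 and~4 correctly locate where the real difficulty lies, but Step~3 as written would not close the gap even in principle.
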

An analogous statement 
in higher dimensional algebraic quantum field theory (see \cite{Ha}) is known to be true.
Namely, it is shown that under natural assumption
for every net $\A$ there is a compact (metrizable) group $G$ with a central involutive element $k\in G$,
such that the category of Doplicher--Haag--Roberts representations $\DHR(\A)$ is the category of unitary representations of $G$, which is $\ZZ_2$-graded by $k$.
Every such pair  $\{G,k\}$ can be realized 
using free field theory \cite{DoPi2002}.

Conjecture \ref{conj:1}
 would imply the following weaker conjecture.
\begin{conjecture}
\label{conj:2}
Let $\cF$ be a unitary fusion category (UFC), then there is a completely rational conformal net $\A$ with $\Rep(\A)\cong Z(\cF)$, 
where $Z(\cF)$ is the Drinfeld center.

Equivalently, let $N\subset M$ be a finite depth subfactor, then there is
 a completely rational conformal net $\A$ with $\Rep(\A)\cong D(N\subset M)$,
where $D(N\subset M)$ denotes the quantum double of $N\subset M$.
\end{conjecture}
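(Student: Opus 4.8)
The plan is to prove the equivalent subfactor formulation. For a finite depth $N\subset M$ with even part fusion category $\cF=\bim M\cF M$, the quantum double $D(N\subset M)$ is the Drinfeld center $Z(\cF)$, which is a UMTC \cite{Mg2003II}, and by \cite{KaLoMg2001} it suffices to produce a completely rational net $\A$ with $\Rep(\A)\cong Z(\cF)$. The engine is a categorical factorization fed into the $\alpha$-induction machinery. Suppose first that $\cF$ is Morita equivalent to a modular fusion category $\cC$ that is itself realized as $\cC\cong\Rep(\A_0)$ for a completely rational net $\A_0$; this is exactly the situation produced by $\alpha$-induction from a net inclusion $\A_0\subset\cB$. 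Then by Morita invariance of the Drinfeld center together with M\"uger's factorization theorem one has, writing $\overline{\cC}$ for $\cC$ with the reversed braiding,
\[
Z(\cF)\;\cong\;Z(\cC)\;\cong\;\cC\boxtimes\overline{\cC}\;\cong\;\Rep(\A_0)\boxtimes\Rep(\tilde\A_0)\;\cong\;\Rep(\A_0\otimes\tilde\A_0),
\]
where $\tilde\A_0$ is any completely rational net with $\Rep(\tilde\A_0)\cong\overline{\cC}$. Since a tensor product of completely rational nets is completely rational with representation category the Deligne product, the net $\A_0\otimes\tilde\A_0$ realizes $Z(\cF)$.

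Thus in the Morita-to-a-net-realized-modular-category case the problem reduces to constructing the opposite-braiding net $\tilde\A_0$. I expect this to be the first genuine difficulty: the naive orientation-reversed net $\rev{\A_0}$ carries $\overline{\cC}$ but generally fails positivity of the energy, and the modular-categorical data fix the central charge only modulo $8$, so $\overline{\cC}$ forces a positive-energy realization at central charge $\equiv-c\pmod 8$. One must therefore exhibit an honest positive-energy net at such a charge with the prescribed reversed braided category. For $\cC=\Rep(\SU(2)_k)$ this is done concretely (e.g.\ via a $\ZZ_2$ simple-current extension involving a $\Spin$-type net), which together with the classical fact that every subfactor of index $<4$ arises by $\alpha$-induction from $\SU(2)_k$ settles the \ADEe cases.

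For the full statement one must go beyond $\alpha$-induction, since not every UFC is Morita equivalent to a modular category, let alone one realized by a net (for instance $\Rep(G)$ with $G$ nonabelian, or categories from exotic subfactors). The appropriate general reformulation uses that $Z(\cF)$ always carries a canonical Lagrangian algebra $L$, a connected commutative Frobenius algebra of dimension $\dim\cF$: realizing $Z(\cF)\cong\Rep(\A)$ is equivalent to constructing a holomorphic completely rational net $\cB$ and a finite-index subnet $\A\subset\cB$ whose dual canonical endomorphism implements $L$. For group-theoretical $\cF$, Morita equivalent to $\mathrm{Vec}_G^\omega$, one takes $\cB$ a holomorphic lattice or loop-group net carrying a $G$-action of anomaly $\omega$ and sets $\A=\cB^{G}$; the orbifold representation category is then $Z(\mathrm{Vec}_G^\omega)\cong Z(\cF)$, extending the construction beyond the braided-product method.

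The genuinely open case, and the main obstacle, is non-group-theoretical $\cF$: there is presently no construction of a holomorphic completely rational net carrying a fusion-category symmetry with defect category $\cF$. The attack I would pursue is to chiralize the Longo--Rehren (Ocneanu asymptotic inclusion) construction, which already realizes $Z(\cF)$ as a subfactor and hence as a two-interval or full two-dimensional net; the task is to split this into a single chiral net with the correct DHR superselection structure. The hard part will be exactly this chiralization step, which in categorical terms is again the production of an opposite-braiding net and the control of the attendant central-charge anomaly, a positive-energy chiral net realizing $\overline{\cC}$ not being available a priori. In this sense the full Conjecture \ref{conj:2} reduces to Conjecture \ref{conj:1} for modular categories of Drinfeld-double type, and the principal difficulty is the construction of positive-energy completely rational nets with prescribed (reversed) braided representation category.
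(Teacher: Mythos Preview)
The statement is a \emph{conjecture} in the paper, not a theorem; the paper proves it only for subfactors of index $<4$ (Corollary \ref{cor:Doubles}), and you correctly acknowledge at the end that the general case remains open and essentially reduces to Conjecture \ref{conj:1}. So there is no full proof to compare against, only the partial argument.

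However, your sketch of that partial argument contains a genuine error. You assert that $\alpha$-induction from a local inclusion $\A_0\subset\cB$ produces a fusion category $\cF$ that is \emph{Morita equivalent to the modular category} $\cC=\Rep(\A_0)$, and hence $Z(\cF)\cong\cC\boxtimes\rev{\cC}$. This is false in general: the category $\cD_+$ generated by $\alpha^+$-induction is Morita equivalent to $\cC$ only when the extension is trivial. The correct factorization, which is the paper's Proposition \ref{prop:sandwich}(3), is $Z(\cD_+)\cong\cC\boxtimes\rev{\cD_0}$ with $\cD_0\cong\Rep(\cB)$ the ambichiral part. Concretely, the even part of the $E_6$ subfactor cannot be Morita equivalent to \emph{any} modular category, since its Drinfeld center has $10$ simple objects and $10$ is not a perfect square; so your route via $\A_0\otimes\tilde\A_0$ cannot reach the $E_6$ double. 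The paper instead realizes $Z(\cD_+)$ as $\Rep(\A\otimes\tilde\cB)$ and then passes to $Z(\bim[\beta]{M}{\cC}{M})$ for the relevant subcategory via Izumi's Galois correspondence (Proposition \ref{prop:GaloisDouble}); this extra intermediate-extension step is what produces the $\ZZ_2$ simple-current extensions in Corollary \ref{cor:Doubles}. A smaller point: the opposite-braiding net $\tilde\A_k$ for $\SU(2)_k$ is built in the paper as a coset inside $\A_{\grE_8,1}^{\otimes k}$ (Proposition \ref{prop:Akmirror}), not by a $\ZZ_2$ simple-current extension with a $\Spin$-type net; the $\Spin(11)_1$ factor enters only at the final realization step for $E_6$, playing the role of $\tilde\cB$ rather than $\tilde\A_0$.
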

\begin{rmk}
The net $\A$ in Conjecture \ref{conj:1} or \ref{conj:2}  would be far from unique. Namely, let $\cB$ be a holomorphic net, \ie 
the representation category is trivial $\Rep(\cB)\cong \Hilb$, 
then $\Rep(\A\otimes \cB)\cong\Rep(\A)$. 
\end{rmk} 
So far a technique which produces from a subfactor or a fusion category a conformal field theory
is not established, though see \cite{Jo2014} for some recent approach. 
But subfactors up to index 5 are classified and we can try to exhaust (part of) the classification list, by constructing a CFT
model for every subfactor in the list.  

If we have a UMTC $\cC$ we can replace the braiding by its opposite braiding $\varepsilon(\rho,\sigma)=\varepsilon(\sigma,\rho)^\ast$ which gives (in general) a new UMTC denoted $\rev{\cC}$. 
\begin{conjecture}
\label{conj:3}
Let $\A$ be a completely rational conformal net. Then there exist a completely rational conformal net 
$\tilde \A$, such that $\Rep(\tilde \A)\cong \rev{\Rep(\A)}$.
\end{conjecture}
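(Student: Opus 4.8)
The plan is to reduce Conjecture \ref{conj:3} to a statement about holomorphic completions, to settle it explicitly for the nets actually needed, and to isolate the general case as the genuine obstacle. It is worth first recording why the obvious attempt fails: one would like to take $\tilde\A$ to be the orientation-reversed net $I\mapsto\A(\bar I)$, since reversing the orientation of $S^1$ interchanges the two opposite Doplicher--Haag--Roberts braidings and so passes from $\Rep(\A)$ to $\rev{\Rep(\A)}$. But conjugating the M\"obius (equivalently $\Diff(S^1)$) action by an orientation-reversing diffeomorphism inverts the rotation subgroup, so $I\mapsto\A(\bar I)$ is covariant only with \emph{negative} energy. Passing in addition to the complex-conjugate Hilbert space restores positivity of the conformal Hamiltonian, but it reverses the braiding a second time and therefore lands back at $\Rep(\A)$. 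Thus the braiding reversal and the sign of the energy are coupled, and no purely geometric manipulation produces $\tilde\A$.

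Instead I would use the following reformulation. If $\cC=\Rep(\A)$, then $\cC\boxtimes\rev\cC$ is modular and carries the canonical Lagrangian Q-system $\Theta=\bigoplus_{\rho\in\Irr(\cC)}\rho\boxtimes\bar\rho$, every summand of which has trivial twist. Hence, \emph{if} a completely rational $\tilde\A$ with $\Rep(\tilde\A)\cong\rev\cC$ exists, the local extension of $\A\otimes\tilde\A$ by $\Theta$ is a completely rational \emph{holomorphic} net $\cF$ generated by $\A$ and $\tilde\A$, in which the two are each other's relative commutants. Conversely, suppose $\A$ embeds into a completely rational holomorphic net $\cF$ together with its coset $\cB:=\cF\cap\A'$, so that $\cF=\A\vee\cB$, $\cB$ is completely rational, and $\cF\cap\cB'=\A$. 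Then the Q-system describing $\A\otimes\cB\subset\cF$ is a finite-index Lagrangian algebra in $\Rep(\A)\boxtimes\Rep(\cB)$ meeting each tensor factor only in the unit object, and such an off-diagonal Lagrangian algebra forces a braided equivalence $\Rep(\cB)\cong\rev{\Rep(\A)}$; so $\tilde\A:=\cB$ works. Thus Conjecture \ref{conj:3} is equivalent to the assertion that every completely rational conformal net occurs as the coset of itself inside a completely rational holomorphic net.

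For the applications of this paper it suffices, since every subfactor of index $<4$ arises by $\alpha$-induction from $\SU(2)_k$, to carry this out for $\A=\SU(2)_k$. Here I would build the holomorphic completion by hand out of the ``pointed'' building blocks whose nets and representation categories are completely understood: the free-fermion and $\Spin(n)_1$ nets and the lattice nets. The concrete ingredients are the conformal embeddings of $\SU(2)_k$ (for instance $\SU(2)_4\subset\SU(3)_1$, $\SU(2)_{10}\subset\Spin(5)_1$, and $\SU(2)_{28}\subset(\Gtwo)_1$), the level--rank inclusion $\SU(2)_k\otimes\SU(k)_2\subset\SU(2k)_1$ together with a primitive embedding of the underlying $A_{2k-1}$ lattice into an even unimodular lattice, and the Goddard--Kent--Olive cosets $\SU(2)_{k-1}\otimes\SU(2)_1\supset\SU(2)_k\otimes\Vir_c$, which peel off a Virasoro minimal-model net and reduce general $k$ to small cases by induction. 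The net $\tilde\A$ is then the coset of $\SU(2)_k$ inside the resulting holomorphic net; the $\Spin(11)_1$ in the $E_6$ statement of the abstract plays exactly the role of such a coset partner for $\SU(2)_{10}$, the quantum double appearing as a $\ZZ_2$-simple-current extension of $\SU(2)_{10}\otimes\Spin(11)_1$.

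The hard part is the general case: there is no known procedure for embedding an arbitrary completely rational net into a holomorphic one, and carrying it out for every net would in particular realize every Drinfeld center by a net, \ie prove Conjecture \ref{conj:2}. Even in the $\SU(2)_k$ cases the two delicate points are: (i) verifying that the coset $\tilde\A$ is genuinely \emph{completely rational}, for which one invokes that a coset of a completely rational net inside a completely rational net is again completely rational, together with the split property and finiteness of the $\mu$-index; and (ii) identifying the Q-system of $\A\otimes\tilde\A\subset\cF$ with the canonical diagonal $\Theta$ rather than with some other Lagrangian algebra, which I would settle by comparing conformal spins and fusion rules --- the $S$- and $T$-matrices --- and invoking the classification of modular invariants of $\SU(2)_k$.
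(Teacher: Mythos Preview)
The statement is a \emph{conjecture} in the paper, not a theorem; the paper offers no proof and explicitly leaves the general case open. You correctly acknowledge this, so your proposal should be read as a strategy together with the special case needed for the paper's applications, not as a proof of Conjecture~\ref{conj:3}. Your reformulation---that the conjecture is equivalent to every completely rational net occurring as a normal, cofinite subnet of a holomorphic net---is precisely what the paper isolates: one direction is Proposition~\ref{prop:Mirror}, the other is the Longo--Rehren extension (used implicitly around Proposition~\ref{prop:Holomorphic}). On this structural point you and the paper agree.

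Where you diverge is the construction of $\tilde\A_k$ for $\SU(2)_k$. The paper (Proposition~\ref{prop:Akmirror}) uses one uniform embedding, valid for all $k$: the diagonal $\A_k\subset\A_1^{\otimes k}$ composed with $\A_1\otimes\A_{\grE_7,1}\subset\A_{\grE_8,1}$ gives $\A_k\subset\A_{\grE_8,1}^{\otimes k}$, and $\tilde\A_k$ is its coset. Normality is checked once via \cite{Xu2007}, and complete rationality follows from the sandwich $\A_k\otimes\cB_k\otimes\A_{\grE_7,1}^{\otimes k}\subset\A_k\otimes\tilde\A_k\subset\A_{\grE_8,1}^{\otimes k}$. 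Your alternative---a patchwork of conformal embeddings, level--rank duality $\SU(2)_k\otimes\SU(k)_2\subset\SU(2k)_1$, and inductive GKO reductions---could also be made to work, but it is less clean: each ingredient requires its own normality and cofiniteness check, and the level--rank inclusion is not normal on the nose (the centers overlap), so the Q-system is not purely off-diagonal and the argument you sketch in point~(ii) would need adjustment.

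One factual slip: $\A_{\Spin(11),1}$ is \emph{not} the coset partner of $\SU(2)_{10}$ in any holomorphic completion. The global dimensions already disagree. Its role in the $E_6$ realization is as $\tilde\cB$, the mirror of the \emph{extension} $\cB=\A_{\Spin(5),1}\supset\A_{10}$; that is, $\Rep(\A_{\Spin(11),1})\cong\rev{\Rep(\A_{\Spin(5),1})}$, and the quantum double is assembled via Proposition~\ref{prop:sandwich}(3) as $\cC\boxtimes\rev{\cD_0}$ with $\cD_0=\Rep(\A_{\Spin(5),1})$, not as $\cC\boxtimes\rev\cC$.
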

Here the positivity of energy is crucial. One can easily construct $\tilde \A$ with ``negative energy'' having this property. Note 
that Conjecture \ref{conj:3} would imply that Conjecture \ref{conj:2} would hold for all $\cF=\Rep(\A)$ which are representation 
category of a conformal net $\A$. Indeed, $\cC=\Rep(\A)$ is a UMTC and thus $Z(\cC)\cong \cC \boxtimes \rev{\cC} \cong\Rep(\A\otimes\tilde \A)$.

There are more exotic subfactors for which the realization by conformal field theory in any sense is not know. The first is the Haagerup subfactor
\cite{Ha1994}.
Its quantum double is considered to be exotic in
\cite{HoRoWa2008}.
In the same article also the quantum double of the $E_6$ subfactor is considered exotic. The authors admit that they did not consider simple current extensions.
We show that the double of $E_6$ indeed just arises
as $\ZZ_2$--simple current construction of $\SU(2)_{10}\times \SO(11)_1$ and thus is far from exotic.
We also note that the even part of the $E_6$ subfactor is a pivotal fusion category of rank 
3 and the lowest rank example of a pivotal fusion category which is not braided
by the classification of rank 3 pivotal fusion categories \cite{Os2013}.

Conjecture \ref{conj:2} would give a positive answer to the question: 
\begin{question} 
  \label{quest:AllSubfactors}
  Does every finite depth subfactor come from conformal field theory (\cf 
\cite{Jo2014}).
\end{question}
 Namely, for every completely rational conformal net $\A$, Kawahigashi, Longo, Rehren and the author
 have recently shown
that certain subfactors related to $\Rep(\A)$ classify the phase boundaries of a full conformal field theory
on Minkowski space based on the chiral theory $\A$. 
\begin{prop}[see Proposition \ref{prop:PhaseBoundaries}]
  Let $N\subset M$ be a subfactor and a completely rational conformal net 
  $\A$ with $\Rep(\A)\cong D(N\subset M)$. Then there is a phase boundary related to the subfactor $N\subset M$.
\end{prop}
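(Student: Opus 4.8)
The plan is to recognize $N\subset M$ itself as one of the ``subfactors related to $\Rep(\A)$'' that \rprop{PhaseBoundaries} puts in bijection with phase boundaries, the point being precisely that $\Rep(\A)$ is its quantum double. Write $\cF$ for the even part of $N\subset M$; since $\bim{N}{\cF}{N}$ and $\bim{M}{\cF}{M}$ are (higher) Morita equivalent their Drinfeld centers coincide, so by hypothesis $\Rep(\A)\cong D(N\subset M)\cong Z(\cF)$. First I would recall, from \rprop{PhaseBoundaries}, the classification of phase boundaries between full two-dimensional conformal field theories sharing the chiral observables $\A$: such a boundary is described by a (not necessarily commutative) Q-system $Q$ in $\Rep(\A)$, equivalently by the braided subfactor whose dual canonical endomorphism is the underlying object of $Q$ realized by transportable endomorphisms localized in an interval, and the two bulk theories meeting at the boundary are recovered from the $\alpha^+$- and $\alpha^-$-inductions attached to $Q$, as commutative Q-systems in $\Rep(\A)\boxtimes\rev{\Rep(\A)}$. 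So it suffices to produce, out of $N\subset M$, such a $Q$.

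Second I would perform that reduction. The standard invariant of $N\subset M$ is the same datum as a connected Q-system $\Gamma$ in $\cF:=\bim{N}{\cF}{N}$ whose underlying object generates $\cF$; realized by endomorphisms of a type \three factor, $\Gamma$ is the canonical Q-system $\bar\iota\iota$ of the inclusion $\iota\colon N\hookrightarrow M$, and $\bim{\Gamma}{\cF}{\Gamma}\cong\bim{M}{\cF}{M}$. Its full center $Z(\Gamma)$ is a commutative, indeed Lagrangian, Q-system in $Z(\cF)\cong\Rep(\A)$, with $\mathrm{Mod}_{\Rep(\A)}(Z(\Gamma))\cong\bim{M}{\cF}{M}$; and, the full-center construction being injective on Morita classes, $Z(\Gamma)$ retains the full standard invariant of $N\subset M$. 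Taking $Q:=Z(\Gamma)$ and feeding it into \rprop{PhaseBoundaries} produces a phase boundary; by construction the braided subfactor carried by it is an avatar of $N\subset M$, and the two adjacent bulk theories are the full centers of $\Gamma$ and of its Morita companion. This is the phase boundary asserted to exist.

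The main obstacle is the matching of invariants: one must verify that the braided subfactor \rprop{PhaseBoundaries} extracts from $Z(\Gamma)\in\Rep(\A)$ genuinely reproduces $N\subset M$ --- that its principal and dual principal graphs are those of $N\subset M$ and its two even parts are $\bim{N}{\cF}{N}$ and $\bim{M}{\cF}{M}$ --- rather than some $\alpha$-induced or Morita-equivalent relative. That is exactly the $\alpha$-induction bookkeeping that \rprop{PhaseBoundaries} is built to carry out, and it is where complete rationality of $\A$ enters: finiteness of the $\mu$-index, strong additivity and Haag duality on the interval, and the relative braiding between $\alpha^+$- and $\alpha^-$-induced endomorphisms. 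The remaining points --- that $Z(\Gamma)$ meets the Frobenius and specialness axioms, that transportability and complete rationality survive the realization of the Q-system by localized endomorphisms, and that the two bulk theories are the expected commutative Q-systems --- are routine given the cited results, so I expect the argument to come down to invoking \rprop{PhaseBoundaries} once $Z(\Gamma)$ is in hand.
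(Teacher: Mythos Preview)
Your proposal has a structural confusion and a genuine mathematical gap. Structurally, you repeatedly invoke \rprop{PhaseBoundaries} as an external classification tool into which you ``feed'' a Q-system, but \rprop{PhaseBoundaries} \emph{is} the statement you are being asked to prove (the introduction merely forward-references it). What you presumably intend to cite is the framework of \cite{BiKaLoRe2014}, in which a phase boundary is prescribed by a morphism $\beta\colon\cB_1(I)\to\cB_2(I)$ between local algebras of two extensions of $\A$ with $\bar\iota_{\cB_2}\circ\beta\circ\iota_{\cB_1}\in\Rep^I(\A)$. The mathematical gap is that your single commutative Q-system $Q=Z(\Gamma)$ does not supply such data, and the braided subfactor it defines is not $N\subset M$: since $Z(\Gamma)$ is Lagrangian, the associated extension $\cB\supset\A$ is holomorphic and $\A(I)\subset\cB(I)$ has index $\dim Z(\Gamma)=\Dim\cF$, not $[M:N]$. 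Injectivity of the full-center construction only says $Z(\Gamma)$ remembers the \emph{Morita class} of $\Gamma$ in $\cF$; it does not hand you back the standard invariant of $N\subset M$, nor a second extension, nor a morphism $\beta$.

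The paper's argument is a two-step tower and the second step is precisely what you are missing. First one takes the dual Longo--Rehren Q-system of $\bim{N}{\cF}{N}$ (the full center of the \emph{trivial} algebra, not of $\Gamma$) to obtain a holomorphic local extension $\cB_N\supset\A$, whose dual category is $\bim{B_N}{\cC}{B_N}\cong\bim{N}{\cF}{N}\boxtimes(\bim{N}{\cF}{N})^{\op}$. Second, inside that dual category one uses the Q-system $\Theta_{N\subset M}\boxtimes\id$, where $\Theta_{N\subset M}$ is the canonical Q-system of $N\subset M$ in $\bim{N}{\cF}{N}$, to build a further non-local extension $\cB_M\supset\cB_N$. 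By construction $\cB_N(I)\subset\cB_M(I)$ has Q-system isomorphic to $\Theta_{N\subset M}$, hence the same standard invariant as $N\subset M$, and the inclusion map is the required $\beta$. The essential idea --- transporting $\Theta_{N\subset M}$ into the dual category of the Longo--Rehren extension so that it literally reappears as a subfactor of local algebras --- is absent from your outline.
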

So, in this sense Conjecture \ref{conj:2} would really give a positive answer to Question \ref{quest:AllSubfactors}.

The main goal of this article is to confirm  Conjecture \ref{conj:2} for the simple case $[M:N]<4$.
\begin{prop}[see Corollary \ref{cor:Doubles}]
  Every quantum double $D(N\subset M)$ of a subfactor $N\subset M$ with  $[M:N] < 4$ is realized by a completely rational conformal net $\A_{N\subset M}$, \ie
  $\Rep(\A_{N\subset M})\cong D(N\subset M)$.
\end{prop}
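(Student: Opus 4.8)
The plan combines two ingredients. The first is the general realization theorem of this paper: if a subfactor $N\subset M$ is produced by $\alpha$-induction from an inclusion $\A\subset\cB$ of completely rational conformal nets, and there is a completely rational net $\tilde\A$ with $\Rep(\tilde\A)\cong\rev{\Rep(\A)}$, then $D(N\subset M)$ is realized by a completely rational net. The second is the \ADEe classification of subfactors of index $<4$: such an $N\subset M$ has principal graph $A_n$, $D_{2n}$, $E_6$ or $E_8$, and in each case it arises by $\alpha$-induction from the loop-group net $\A=\A_{\SU(2)_k}$, with $k$ the Coxeter number minus two, and a local extension $\cB\supseteq\A$ --- namely $\cB=\A$ in the $A$-case (Temperley--Lieb/Jones subfactors), the $\ZZ_2$ simple-current extension of $\SU(2)_{4n-4}$ in the $D_{2n}$-case, and the conformal inclusions $\SU(2)_{10}\subset\SO(5)_1$ and $\SU(2)_{28}\subset(\Gtwo)_1$ in the $E_6$- and $E_8$-cases; all of these nets are completely rational. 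Granting the general theorem, the corollary thus reduces to the single statement: \emph{for every level $k$ there is a completely rational net $\tilde\A_k$ with $\Rep(\tilde\A_k)\cong\rev{\Rep(\SU(2)_k)}$.}

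To produce $\tilde\A_k$ I would realize $\rev{\Rep(\SU(2)_k)}$ as a coset. Since $c(\SU(2)_k)=3k/(k+2)<3$, the net $\A_{\SU(2)_k}$ embeds (covariantly) into a holomorphic completely rational net $\mathcal H$ --- for instance diagonally into $\A_{\SU(2)_1}^{\otimes k}$, the lattice net of $\sqrt2\,\ZZ^k$, which in turn sits inside the holomorphic lattice net of an even unimodular lattice containing it. Put $\tilde\A_k:=\A_{\SU(2)_k}'\cap\mathcal H$. By the coset theory of Longo and Xu, $\tilde\A_k$ is again completely rational, and once $(\A_{\SU(2)_k},\tilde\A_k)$ is a normal pair in $\mathcal H$ (each the relative commutant of the other, together generating $\mathcal H$), the mirror picture identifies $\Rep(\tilde\A_k)$ with $\rev{\Rep(\SU(2)_k)}$: the branching algebra of $\A_{\SU(2)_k}\otimes\tilde\A_k\subset\mathcal H$ is Lagrangian, of the form $\bigoplus_\rho\rho\boxtimes\rho^{\,c}$ pairing each $\SU(2)_k$-sector $\rho$ with a $\tilde\A_k$-sector $\rho^{\,c}$ of the same dimension and conjugate twist, and this pairing is a braided equivalence onto the reversed category. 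Normality is arranged by passing to the double relative commutant of $\A_{\SU(2)_k}$ in $\mathcal H$; since local extensions of $\SU(2)_k$ correspond to connected étale algebras in $\Rep(\SU(2)_k)$, and by the \ADEe classification these are all trivial except at $k=4n-4,10,28$, the replacement changes nothing for generic $k$, and the finitely many exceptional levels are handled by choosing the embedding so as not to factor through a $D$- or $E$-type extension.

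The main obstacle is exactly this construction of $\tilde\A_k$ --- establishing, uniformly in $k$, that the coset braiding is the \emph{reverse} of $\Rep(\SU(2)_k)$ rather than merely a braiding with reversed twists, together with normality of the coset pair and complete rationality; the reduction above and the invocation of the general theorem are then routine. Finally I would make the $E_6$ case explicit: running the construction with $\A=\SU(2)_{10}$, $\cB=\SO(5)_1$ and $\tilde\A_{10}$ as above, the resulting net has the same representation category as the local $\ZZ_2$ simple-current extension of $\SU(2)_{10}\times\Spin(11)_1$ along the order-two sector $\rho_{10}\boxtimes v$ --- the top $\SU(2)_{10}$-sector (twist $-1$) times the vector sector of $\Spin(11)_1$ (twist $-1$), so the current has trivial twist and the extension is local --- and a short Q-system computation identifies the outcome with $D(E_6)$ (both modular categories have global dimension $48+24\sqrt3$); in particular $D(E_6)$ is not exotic.
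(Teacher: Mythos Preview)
Your proposal is correct and follows essentially the same route as the paper: reduce via the general $\alpha$-induction realization theorem and the \ADEe classification to constructing $\tilde\A_k$, then obtain $\tilde\A_k$ as the coset of $\A_k$ inside a holomorphic net built from $\A_1^{\otimes k}$ sitting in an even unimodular lattice net. The only differences are in execution: the paper takes the holomorphic net to be $\A_{\grE_8,1}^{\otimes k}$ explicitly and establishes normality by citing Xu's lemma that $\A_k\subset\A_1^{\otimes k}$ is normal (together with normality of $\A_1\otimes\A_{\grE_7,1}\subset\A_{\grE_8,1}$), which is cleaner than your workaround via passing to the double relative commutant and arranging the embedding to avoid $D$- or $E$-type intermediate extensions at the exceptional levels.
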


We note that the next 
possible index is realized by the Haagerup subfactor mentioned above with index 
$$[M:N]=\frac{5+\sqrt{13}}{2} \approx  4.303$$
and there is strong indication in \cite{EvGa2011}, that there is a conformal net realizing its double.
We hope that our techniques here give new ideas to construct this examples.

This article is organized as follows.
In Section \ref{sec:QD} we give some preliminaries about braided subfactors and quantum doubles and in 
 Section \ref{sec:CN} we give some preliminaries about conformal nets on the circle and introduce some examples which we later need. We give some characterization and structural results 
of conformal nets whose representation category is a quantum double.
In Section \ref{sec:Realization} we give some results about conformal nets having  the opposite braiding of a given net. We give 
examples of nets having opposite braiding of $\SU(2)_k$. We give a general criterion how 
the quantum double of a subfactor arising by $\alpha$-induction of an inclusion of conformal nets yields a conformal net 
realizing the quantum double of it. We use these techniques for the realization of quantum doubles for index less than 4
and some sporadic examples between $4$ and $5$.
In Section \ref{sec:VOA}, by using the  categorical nature of our result, whe show how to relate it to vertex operator algebras. 
In particular, there is also are realization 
of quantum doubles of subfactors with index less than 4 by vertex operator algebras.

\subsection*{Acknowledgements}
I would like to thank Zhengwei Liu for some useful comment 
and Luca Giorgetti and Yasuyuki Kawahigashi for remarks on early version of this manuscript.

Ideas for this work was were obtained at the Workshop ID: 1513 ``Subfactors and Conformal Field Theory'' at the 
Mathematische Forschungsinstitut Oberwolfach and the author would like thanks
the organizers and the MFO.

\section{Quantum Doubles}
\label{sec:QD}
We are using here the language of endomorphisms of type III factors (see \cite{BiKaLoRe2014-2}), but the same can be understand in terms of 
bimodules of type II or type III factors or in terms of unitary fusion categories.

We note the it follows from 
\cite{HaYa2000}, (more indirect also from \cite{Po1993,Po1994-2} and in certain cases \cite{Oc1988})
that any abstract unitary fusion category $\cF$ can be realized as $\cF\subset \End(M)$ with $M$ the hyperfinite type III${}_1$ factor. 
By Popa's theorem 
\cite{Po1993}
such a realization is unique, namely if $\tilde\cF\subset \End(N)$ another realization then there is
an isomorphism $N\to M$ implementing the equivalence between the two fusion categories (\cf 
\cite[Proof of Corollary 35]{KaLoMg2001}).

Given an inclusion $N\subset M$ of hyperfinite type III${}_1$ factors $M,N$ with finite minimal index $[M:N] < \infty$
\cite{Jo1983,Ko1986} 
we denote by $\iota\colon N\to M$ the inclusion map. We often write $\iota(N)\subset M$ to have 
a uniform notation if we consider endomorphisms $\rho$ of $M$ and inclusions $\rho(M)\subset M$.
By the finite index assumption, there is a conjugate morphism $\bar \iota\colon M \to N$,
such that $\id_M\prec \iota\circ\bar \iota$ and $\id_N\prec \iota\circ\bar \iota$.
Then $(\bar\iota\circ\iota)^{\circ n}$ and $(\iota\circ\bar\iota)^n$, $n\in \NN$  generate full \Cstar-tensor categories (\cite{LoRo1997})
$\bim[\iota]N\cF N \subset \End_0(N)$  and $\bim[N\subset M]M\cF M \subset \End_0(N)$, respectively,  
and we say that $\iota(N)\subset M$ has finite depth if and only if 
$|\Irr(\bim[N\subset M]N\cF N )|< \infty$, or equivalently
$|\Irr(\bim[N\subset M]M\cF M )|< \infty$.
Similarly, one defines full replete subcategories $\bim[N\subset M]N\cF M=\langle(\bar\iota\circ\iota)^n\circ\bar\iota\rangle\subset \Mor_0(M,N)$
and $\bim[N\subset M]M\cF N=\langle \iota\circ(\bar\iota\circ\iota)^n\rangle\subset \Mor_0(N,M)$.

The (strict) 2-category $\cF^{N\subset M}$ with two 0-objects $\{N,M\}$
and the hom-categories given by $\bim[N\subset M]N\cF N$, $\bim[N\subset M]N\cF M$, $\bim[N\subset M]M\cF N$ and $\bim[N\subset M]M\cF M$ is called the \textbf{standard invariant} of $N\subset M$.
The finite depth condition corresponds to rationality in conformal field theory.

Given a fusion category $\bim N\cF N\subset \End(N)$ and a subfactor $N\subset M$ \textbf{related} to $\bim N\cF N$, \ie $\bar \iota\circ\iota\in \bim N\cF N$
(then $\bim[N\subset M] N{\cF}{N}\subset \bim N\cF N$) the \textbf{dual category}
$\bim M \cF M\subset \End_0(M)$ is the fusion category generated by $\beta\prec \iota \circ\rho \circ\bar\iota$ with 
$\rho \in \bim N{\cF} N$. The categories $\bim N\cF N$ and $\bim M\cF M$ are Morita equivalent
in the sense of  \cite{Mg2003}
the Morita equivalence is given by tensoring with $\iota$ and $\bar \iota$.

We start with a unitary \textbf{modular tensor category (UMTC)} $\bim N\cC N\subset \End_0(N)$,
where the unitary braiding in $\Hom(\rho\circ\sigma,\sigma\circ\rho)$ is denoted by 
$\varepsilon^+(\rho,\sigma)$ or simply $\varepsilon(\rho,\sigma)$ 
and the reversed braiding by $\varepsilon^-(\rho,\sigma)=\varepsilon(\sigma,\rho)^\ast$.

Let us fix $\iota(N)\subset M$ related to $\bim N\cC N$. This gives $\theta=\bar\iota \circ \iota$ the structure 
of an algebra object in $\bim N\cC N$, more precisely a Q-system $\Theta=(\theta,x,w)$. There is a notion of commutativity, namely
let $x\in\Hom(\theta,\theta\circ\theta)$ be the co-multiplication, then
the Q-system is called \textbf{commutative} if and only if $\varepsilon(\theta,\theta)x=x$. 

Let us fix a subfactor $\iota(N)\subset M$ related to $\bim N\cC N$. Then $\alpha$-induction maps from $\cC=\bim N\cC N$ to the dual category $\cD =\bim M\cC M$ and is given by:
\begin{align*}
  \bim N \cC N &\longrightarrow \bim M \cC M\subset \bim M \cC M\\
  \lambda & \longmapsto \alpha ^\pm_\lambda :=\bar\iota^{-1}\circ \Ad (\varepsilon^\pm(\lambda,\theta))  
  \circ \lambda\circ \bar \iota \in \End(\M)
  \,.
\end{align*}
We denote by $\cD_\pm \equiv\bim[\pm]M{\cC}M= \langle \alpha_\rho^\pm:\rho\in\bim N\cC N\rangle$ the UFC
generated by $\alpha^\pm$-induction, respectively, 
and  by $\cD_0\equiv\bim[0]M{\cC}M=\cD_+\cap\cD_-$ the \textbf{ambichiral} category.

Let $\cF$ be a unitary fusion category. We can assume that it is (essentially uniquely) 
realized as $\cF\cong\bim N \cF N\subset \End_0(N)$
with $N$ a hyperfinite type III${}_1$ factor.
Let 
$A:=N\otimes N^\op \subset \iota_\LR(B)$ be the Longo--Rehren inclusion
with $\bim A\cC A \cong \bim N \cF N \boxtimes \bim[\op] N \cF N$
and $\bim B \cC B$ 
the category generated 
by $\langle (\bar\iota_\LR\circ \beta\circ\iota_\LR)^n : \beta \in \bim A\cC A \rangle \subset \End(B)$.
Then Izumi showed that $\bim B \cC B\cong Z(\cF)$, where $Z(\cF)$ denotes the unitary Drinfeld center
\cite[Section 6]{Mg2003II} of
$\cF$, which is a UMTC by \cite{Mg2003II}.
The Q-system $\Theta_\LR=(\theta_\LR,w_\LR,x_\LR)$ with $\theta=\bar \iota_\LR\circ\iota_\LR$ is  commutative and $d \theta_\LR = \Dim(\cF)$,
where $\Dim(\cF) = \sum_{\rho\in \Irr(\cF)} d\rho^2$ is the \textbf{global dimension}.

If we start with a finite depth subfactor $\iota(N)\subset M$, then 
$Z(\bim[N\subset M]N\cF N)\cong  Z(\bim[N\subset M]M\cF M)$ (see proof of Proposition \ref{prop:sandwich} below) and we can talk about the 
\textbf{quantum double of} $\iota(N)\subset M$, denoted by $D(N\subset M)$.

\begin{example}
The quantum double $D(N\subset M)$
has been calculated in
\cite[Section 4]{EvKa1998} for $A_n$ subfactors and \cite[Examples 5.1,5.2]{BcEvKa2001} for $E_6$ and $E_8$ subfactors. 
The quantum double of $E_6$ has also been computed using the tube algebra 
and half-braidings in \cite{Iz2001II}.
\end{example}

The quantum double is related to the Ocneanu's asymptotic inclusion
\cite{Oc1988}, Popa's symmetric enveloping algebra
\cite{Po1994}
and the Longo--Rehren subfactor
\cite{LoRe1995}, see also 
\cite{Ma2000,Iz2000}.

Izumi showed \cite{Iz2000}  that there is a Galois correspondence, namely there is a one-to-one correspondence between intermediate subfactors
$B\subset Q \subset A$ and subcategories $\cG \subset \cF$.

The following (3) was observed \cite[Theorem 12]{Oc2001} for $\cC$ being a $\SU(2)_k$ category and
is partially contained in \cite[Corollary 3.10, 4.8]{BcEvKa2001}.
\begin{prop}
  \label{prop:sandwich}
  Let $\cC \subset \End(N)$ be a UMTC, and $\iota(N)\subset M$ subfactor with commutative
  Q-system $\Theta\in \cC$. 
Denote by $\cD=\langle \beta \prec \iota \rho\bar \iota : \rho\in\cC\rangle\subset \End(M)$ the dual category. Then 
\begin{enumerate}
  \item $Z(\cC)\cong Z(\cD ) \cong \cC \boxtimes \rev{\cC}$,
  \item $Z(\cD_0) \cong \cD_0\boxtimes \rev{\cD_0}$,
  \item $Z(\cD_+) \cong \cC\boxtimes \rev{\cD_0}$,
  \item $Z(\cD_-) \cong \rev{\cC}\boxtimes \cD_0$.
\end{enumerate}
\end{prop}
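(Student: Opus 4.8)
The plan is to reduce all four statements to a single structural fact: for a commutative Q-system $\Theta \in \cC$ inside a UMTC, $\alpha$-induction gives two full and faithful braided (resp. anti-braided) tensor functors $\alpha^\pm \colon \cC \to \cD_\pm$, and the intertwiner $\cC \boxtimes \rev{\cC} \to \cD$, $\lambda \boxtimes \mu \mapsto \alpha^+_\lambda \circ \alpha^-_\mu$, realizes the "sandwiching" of $\cD$ between the two chiral images. Statement (1) is then essentially a restatement of the known fact (from \cite{KaLoMg2001,BcEvKa2001}, and the Longo--Rehren picture discussed above) that the Drinfeld center of a modular category $\cC$ is $\cC \boxtimes \rev\cC$, together with the observation that $Z$ is a Morita invariant and $\cC$, $\cD$ are Morita equivalent via $\iota, \bar\iota$; so $Z(\cC) \cong Z(\cD)$ comes for free once Morita equivalence is in place. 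First I would spell out these functorial properties of $\alpha^\pm$ and record that $\cD_0 = \cD_+ \cap \cD_-$ is precisely the image of the part of $\cC$ on which $\varepsilon^+ = \varepsilon^-$ "through $\Theta$", i.e.\ the objects that are fixed by the full center construction; this identifies $\cD_0$ with a modular category as well (it is the ambichiral category, known to be modular by \cite{BcEvKa2001}).

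For (3) and (4) the key input is Rehren's / Böckenhauer--Evans--Kawahigashi analysis of the chiral subfactor structure: $\alpha^+$ embeds $\cC$ into $\cD_+$ as a braided full subcategory, and one has the "modular invariant" identity relating the dimensions. I would argue that $\cD_+$ is generated, as a modular category, by the two mutually "transparent" (centralizing) full subcategories $\alpha^+(\cC)$ and $\cD_0$, and that these two subcategories are each other's Müger centralizers inside $\cD_+$. Once that double-centralizer / factorization statement is established, Müger's theorem on modular categories containing a modular subcategory (a modular subcategory and its centralizer give a Deligne-product decomposition) yields $\cD_+ \cong \alpha^+(\cC) \boxtimes \cD_0 \cong \cC \boxtimes \cD_0$ as braided categories — but here is the subtlety: the braiding $\cD_+$ inherits on the $\cD_0$ factor is the one coming from $\cC$'s side, which when one passes to the Drinfeld center gets conjugated. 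Taking $Z$ of both sides, using $Z(\cX \boxtimes \cY) \cong Z(\cX) \boxtimes Z(\cY)$ and $Z(\cE) \cong \cE \boxtimes \rev\cE$ for modular $\cE$, and then cancelling the pieces that appear with both braidings, I expect to land on $Z(\cD_+) \cong \cC \boxtimes \rev{\cD_0}$; statement (4) follows symmetrically by swapping $+ \leftrightarrow -$, which swaps $\cC \leftrightarrow \rev\cC$ and fixes $\cD_0$ (since $\cD_0$ is ambichiral, its two induced braidings agree). Statement (2) is then the special case of (1) applied to the modular category $\cD_0$ with the trivial/diagonal Q-system, or directly $Z(\cD_0) \cong \cD_0 \boxtimes \rev{\cD_0}$ since $\cD_0$ is modular.

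The main obstacle I anticipate is pinning down precisely which braiding each tensor factor carries — i.e.\ the bookkeeping of $\rev{(\slot)}$. The statements are not symmetric under $\cC \leftrightarrow \rev\cC$ the way the "naive" center formula is, and the asymmetry in (3) versus (4) is exactly the content that has to be proven, not assumed. Concretely, I would need: (a) $\alpha^+$ is braided from $(\cC,\varepsilon^+)$ and also has a well-defined relative braiding with $\alpha^-(\cC)$ that is "trivial" (the two chiral images centralize each other, which is the categorical form of $\alpha^+_\lambda$ and $\alpha^-_\mu$ having trivial monodromy — a standard $\alpha$-induction fact, e.g.\ from \cite{BcEvKa2001}); (b) the computation that under the equivalence $\cD \cong \cC \boxtimes \rev\cC$, the subcategory $\cD_0$ corresponds to the "diagonal" $\{ \lambda \boxtimes \rev\lambda \}$-type piece, whose center behaves correctly. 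I would lean on \cite[Corollary 3.10, 4.8]{BcEvKa2001} for the $\cD_0$ statements and on \cite{Mg2003,Mg2003II} for the general center/Morita machinery, and present the $\SU(2)_k$ case of \cite[Theorem 12]{Oc2001} as the motivating instance that the general argument upgrades. The dimension count ($\Dim Z(\cD_\pm) = \Dim(\cC)\cdot \Dim(\cD_0) = \Dim(\cD_\pm)^2$, which holds since $\Dim\cD_+ = \Dim\cD_- $ and $\Dim\cD_+ \cdot \Dim\cD_- = \Dim\cC \cdot \Dim\cD_0$ by the ambichiral dimension formula) serves as a useful consistency check at the end.
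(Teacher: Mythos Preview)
Your treatment of (1) and (2) is fine and matches the paper: Morita invariance of the center gives $Z(\cC)\cong Z(\cD)$, modularity of $\cC$ gives $Z(\cC)\cong\cC\boxtimes\rev\cC$, and (2) is the same argument applied to the modular category $\cD_0$.

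The argument for (3) and (4), however, breaks down at a basic structural level. The category $\cD_+$ is \emph{not} braided in general, so Müger's factorization theorem for modular subcategories simply does not apply. Concretely, for the $E_6$ extension of $\SU(2)_{10}$ the category $\cD_+$ is (Morita equivalent to) the even part of the $E_6$ subfactor, which is a rank-$3$ fusion category that admits no braiding at all. Relatedly, $\alpha^+\colon\cC\to\cD_+$ is faithful but not full (the enlargement of hom-spaces under $\alpha$-induction is exactly what produces nontrivial modular invariants), so $\alpha^+(\cC)$ is not a copy of $\cC$ sitting inside $\cD_+$. And even granting a decomposition $\cD_+\cong\cC\boxtimes\cD_0$, taking centers would give the four-fold product $\cC\boxtimes\rev\cC\boxtimes\cD_0\boxtimes\rev{\cD_0}$, whose global dimension is $(\Dim\cC)^2(\Dim\cD_0)^2$, not $(\Dim\cD_+)^2=\Dim\cC\cdot\Dim\cD_0$; there is no legitimate ``cancellation'' step.

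The paper instead identifies $\cD_+$ with the module category $\cC_\Theta$ over the commutative algebra $\Theta$ (\cite[Remark 5.6]{BiKaLo2014}) and then invokes the result $Z(\cC_\Theta)\cong\cC\boxtimes\rev{\cC_\Theta^0}$ of Davydov--Müger--Nikshych--Ostrik \cite[Corollary 3.30]{DaMgNiOs2013} (see also \cite[Remark 4.3]{DrGeNiOs2010}), together with the braided equivalence $\cC_\Theta^0\cong\cD_0$. That theorem is the genuine input you are missing; it does not follow from Müger's splitting theorem for modular subcategories, precisely because $\cC_\Theta$ need not be braided. Statement (4) then follows by applying (3) to $\rev\cC$.
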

\begin{proof}
For (1) it follows by \cite{Sc2001} together with \cite{Mg2003II} that $Z(\cC)\cong Z(\cD)$, because $\cC$ and $\cD$ are Morita equivalent, and again by  \cite{Mg2003II}  $Z(\cC)\cong \cC\boxtimes \rev{\cC}$.
It was shown \eg in \cite[Theorem 4.2]{BcEvKa2000}, that $\cD_0$ is modular, thus the statement (2) follows from (1).
$\cD_+$ is equivalent with $\cC_\Theta$ (\cf \cite[Remark 5.6]{BiKaLo2014}) and 
by \cite[Corollary 3.30]{DaMgNiOs2013}, see also \cite[Remark 4.3]{DrGeNiOs2010} we have 
$Z(\cC_\Theta)\cong \cC\boxtimes \rev{\cC^0_\Theta}$, which is braided equivalent with $\cC\boxtimes\cD_0$, thus (3).
Finally, (4) follows by applying (3) to $\rev{\cC}$.
\end{proof}

\section{Conformal Nets}
\label{sec:CN}
By a conformal net $\A$, we mean a local M\"obius covariant net on the circle. 
It associates with every 
proper interval $I\subset S^1\subset \CC$ on the circle a von Neumann algebra $\A(I)\subset \B(\Hil_\A)$
on a fixed Hilbert space $\Hil$, such that the following properties hold:
    \begin{enumerate}[{\bf A.}]
        \item \textbf{Isotony.} $I_1\subset I_2$ implies 
            $\A(I_1)\subset \A(I_2)$.
        \item \textbf{Locality.} $I_1  \cap I_2 = \emptyset$ implies 
            $[\A(I_1),\A(I_2)]=\{0\}$.
        \item \textbf{Möbius covariance.} There is a unitary representation
            $U$ of $\Mob$ on $\Hil$ such that 
            $  U(g)\A(I)U(g)^\ast = \A(gI)$.
        \item \textbf{Positivity of energy.} $U$ is a positive energy 
            representation, i.e. the generator $L_0$ (conformal Hamiltonian) 
            of the rotation subgroup $U(z\mapsto \e^{\ima \theta}z)=\e^{\ima \theta L_0}$
            has positive spectrum.
        \item \textbf{Vacuum.} There is a (up to phase) unique rotation 
            invariant unit vector $\Omega \in \Hil$ which is 
            cyclic for the von Neumann algebra $\A:=\bigvee_{I\in\cI} \A(I)$.
    \end{enumerate}
A local Möbius covariant net on $\A$ on $\Sc$ is called \textbf{completely 
rational} if it 
\begin{enumerate}[{\bf A.}]
  \setcounter{enumi}{5}
  \item fulfills the 
    \textbf{split property}, \ie 
    for $I_0,I\in \cI$ with $\overline{I_0}\subset I$ the inclusion 
    $\A(I_0) \subset \A(I)$ is a split inclusion, namely there exists an 
    intermediate type I factor $M$, such that $\A(I_0) \subset M \subset \A(I)$.
  \item is 
 \textbf{strongly additive}, \ie for $I_1,I_2 \in \cI$ two adjacent intervals
obtained by removing a single point from an interval $I\in\cI$ 
the equality $\A(I_1) \vee \A(I_2) =\A(I)$ holds.
  \item for $I_1,I_3 \in \cI$ two intervals with disjoint closure and 
    $I_2,I_4\in\cI$  the two components of $(I_1\cup I_3)'$, the 
    \textbf{$\mu$-index} of $\A$
    \begin{equation*}
      \mu(\A):= [(\A(I_2) \vee \A(I_4))': \A(I_1)\vee \A(I_3) ]
    \end{equation*}
    (which does not depend on the intervals $I_i$) is finite.
\end{enumerate}

A \textbf{representation} $\pi$ of $\A$ is a family of representations 
$\pi=\{\pi_I\colon\A(I)\to \B(\Hil_\pi)\}_{I\in\cI}$ on a common Hilbert space $\Hil_\pi$ which are compatible, i.e.\  $\pi_J\restriction \A(I) =\pi_I$ for $I\subset J$.
Every non-degenerate representation $\pi$ with $\Hil_\pi$ separable 
turns---for every choice of an interval $I_0\in\cI$---out to be equivalent to a representation 
$\rho$ on $\Hil$, such that $\rho_J=\id_{\A(J)}$ for $J\cap I_0=\emptyset$. Then Haag duality implies
that $\rho_{I}$ is an endomorphism of $\A(I)$ for every $I \in \cI$ with $I\supset I_0$. 
Thus we can realize the representation category of $\A$ inside the C$^\ast$ tensor category of endomorphisms $\End_0(N)$
of a type III factor $N=\A(I)$ and the embedding turns out to be full and replete. We denote this category by $\Rep^I(\A)$.
In particular, this gives the representations of $\A$ the structure of a tensor category \cite{DoHaRo1971}. It has a natural \textbf{braiding}, which is completely fixed
by asking that if $\rho$ is localized in $I_1$ and $\sigma$ in $I_2$ where $I_1$ is left of $I_2$ inside $I$ then $\varepsilon(\rho,\sigma)=1$
 \cite{FrReSc1989}. The \textbf{statistical dimension} of $\rho\in\Rep^I(\A)$ is given by
$d\rho=[N:\rho(N)]^{\frac12}$.
Let $\A$ be completely rational conformal net, then by \cite{KaLoMg2001} $\Rep^I(\A)$ is a UMTC
and $\mu_\A=\dim(\Rep^I(\A))$.

We write $\A\subset \cB$ or $\cB\supset \A$
if there is a representation $\pi=\{\pi_I\colon\A(I)\to\cB(I)\subset\B(\Hil_\cB)\}$ of $\A$ on 
$\Hil_\cB$  and an isometry $V\colon \Hil_\A\to \Hil_\cB$
with $V\Omega_\A=\Omega_\cB$ and $VU_\A(g)=U_\cB(g)V$.
We ask that further that
$Va=\pi_I(a)V$ for $I\in\cI$, $a\in\A(I)$. Define $p$ the projection 
on $\Hil_{\A_0}=\overline{\pi_I(\A(I))\Omega}$. Then $pV$ is a unitary equivalence
of the nets $\A$ on $\Hil_\A$ and  $\A_0$ defined by $\A_0(I)=\pi_I(\A(I))p$ on $\Hil_{\A_0}$.

\begin{defi}
\label{def:coset}
Let 
$\A\subset \cB$ an inclusion of conformal nets. Then we define the 
\textbf{coset net} 
$\A^\mathrm{c}(I)=\cB(I)\cap \A'$.
Note that $\A^\mathrm{c}\subset \cB$. We call $\A\subset \cB$ \textbf{normal} if 
$\A^\mathrm{cc}=\A$.
We call $\A\subset \cB$ \textbf{co-finite} if $[\cB(I):\A(I)\otimes\A^\mathrm{c}(I)]<\infty$.
\end{defi}
For every co-finite extension $\A\subset \cB$ holds:
$\cB$ is completely rational iff $\A$ and $\A^\mathrm{c}$ are completely
rational \cite{Lo2003}.

\subsection{On conformal nets realizing quantum doubles/Drinfeld centers}
In this section we give some structural results about conformal nets whose representation
category is a quantum double.
If we talk about a subfactor $N\subset M$, we are just interested in finite depth subfactors 
which are hyperfinite of type II$_{1}$ or III$_{1}$. In this case standard invariant is a complete invariant \cite{Po1993}.
We might also replace subfactor by subfactor standard invariant.
We write $N\subset M \approx N_1\subset M_1$ if both have equivalent standard invariant.

\begin{defi}
A \textbf{holomorphic net} $\A$ is a completely rational conformal net with trivial representation category $\Rep(\A)\cong\Hilb$,
or equivalently \cite{KaLoMg2001} with $\mu_\A=1$.
\end{defi}
\begin{prop}[{\cf \cite[Corollary 3.5]{Mg2010}, \cite[Theorem 2.4]{Ka2015}}] 
  \label{prop:Holomorphic}
  Let $\A$ be a completely rational conformal net. The following are equivalent:
  \begin{enumerate}
    \item There is a holomorphic local irreducible extension $\cB\supset \A$. 
    \item $\Rep(\A) \cong Z(\cF)$ for some unitary fusion category $\cF$.
    \item $\Rep(\A) \cong D(N\subset M)$ for some finite depth subfactor $N\subset M$.
  \end{enumerate}
\end{prop}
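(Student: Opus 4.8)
The plan is to prove the three-way equivalence by a cycle of implications, using the structural theory of commutative $Q$-systems (equivalently, local irreducible extensions) together with Proposition~\ref{prop:sandwich} and the characterization of the representation category of a co-finite extension.

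First I would establish $(1)\Rightarrow(2)$. Suppose $\cB\supset\A$ is a holomorphic local irreducible extension. Realize $\A$ as a net of endomorphisms of $N=\A(I_0)$; then the extension corresponds to a commutative $Q$-system $\Theta=(\theta,w,x)$ in $\cC=\Rep^{I_0}(\A)$, whose dimension satisfies $d\theta=\mu_\A/\mu_\cB=\mu_\A=\dim\cC$ since $\cB$ is holomorphic. A commutative $Q$-system whose dimension equals $\dim\cC$ is a \emph{Lagrangian} algebra in the UMTC $\cC$; by the folklore/categorical fact that a UMTC admits a Lagrangian algebra if and only if it is the Drinfeld center of a fusion category (see e.g.\ \cite{DrGeNiOs2010,DaMgNiOs2013} and Proposition~\ref{prop:sandwich}(1)), one gets $\cC\cong Z(\cF)$ for some unitary fusion category $\cF$ (concretely $\cF=\cC_\Theta$, the category of $\Theta$-modules). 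One must check unitarity of this $\cF$, which follows because $\cC_\Theta$ is realized inside $\End_0(N_\cB)$ and hence is a unitary fusion category.

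Next, $(2)\Rightarrow(3)$ is immediate from the dictionary between unitary fusion categories and finite-depth subfactors recalled in Section~\ref{sec:QD}: any unitary fusion category $\cF$ is realized as $\bim N\cF N\cong\bim M\cF M$ for a finite-depth hyperfinite type~III${}_1$ subfactor $N\subset M$, and then $D(N\subset M)=Z(\cF)$ by definition, so $\Rep(\A)\cong Z(\cF)\cong D(N\subset M)$. For $(3)\Rightarrow(1)$ I would run $(1)\Rightarrow(2)$ in reverse: given $\Rep(\A)\cong D(N\subset M)=Z(\bim N\cF N)$, the Drinfeld center always carries a canonical Lagrangian algebra (the image of the forgetful-adjoint induction of the tensor unit, equivalently the Longo--Rehren $Q$-system $\Theta_\LR$ with $d\theta_\LR=\Dim(\cF)=\sqrt{\dim Z(\cF)}\cdot\sqrt{\dim Z(\cF)}$, i.e.\ $d\theta_\LR^2=\dim\Rep(\A)=\mu_\A$). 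Transport this commutative $Q$-system through the equivalence into $\Rep^{I_0}(\A)$; by the standard construction of extensions from commutative $Q$-systems (Longo--Rehren) it yields a local irreducible extension $\cB\supset\A$, and locality of $\cB$ corresponds exactly to commutativity of the $Q$-system. Its index is $d\theta_\LR$, so $\mu_\cB=\mu_\A/d\theta_\LR^2=1$, i.e.\ $\cB$ is holomorphic; complete rationality of $\cB$ follows since it is a finite-index extension of the completely rational net $\A$.

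The main obstacle is the purely categorical input in $(1)\Leftrightarrow(2)$: identifying ``$\cC$ contains a Lagrangian algebra'' with ``$\cC$ is a Drinfeld center,'' and matching the unitary/$C^\ast$ structures on both sides (so that the abstract $\cF$ is genuinely a unitary fusion category and the equivalence $\cC\cong Z(\cF)$ is a unitary braided equivalence). This is where one leans on Müger's results \cite{Mg2003II}, the Lagrangian-algebra characterization of centers in \cite{DrGeNiOs2010,DaMgNiOs2013}, and Proposition~\ref{prop:sandwich}; the conformal-net side ($Q$-systems $\leftrightarrow$ local extensions, and the index/$\mu$-index bookkeeping) is by comparison routine, using \cite{LoRe1995} and \cite{Lo2003}.
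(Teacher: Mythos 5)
Your proposal is correct and follows essentially the same route as the paper: (2)$\Leftrightarrow$(3) via the subfactor/fusion-category dictionary, (2)$\Rightarrow$(1) via the dual Longo--Rehren $Q$-system, and (1)$\Rightarrow$(2) by observing that the $Q$-system of a holomorphic extension is Lagrangian, with $\cF$ realized concretely as $\cC_\Theta\cong\bim[+]{M}{\cC}{M}$ (the paper substantiates the Lagrangian-implies-center step by the explicit computation $Z(\bim[+]{M}{\cC}{M})\cong\Rep^I(\A)\boxtimes\Rep^I(\cB)\cong\Rep^I(\A)$ from Proposition \ref{prop:sandwich}(3), rather than quoting the abstract characterization). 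The only blemish is a consistent square in your dimension bookkeeping: the correct relations are $(d\theta)^2=\mu_\A/\mu_\cB$ and ``Lagrangian'' means $(d\theta)^2=\dim\cC$, so $d\theta=\sqrt{\mu_\A}$ rather than $\mu_\A$; this does not affect the argument.
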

\begin{proof} Given $N\subset M$ take $\cF:=\bim N\cF N$. Conversely, we may assume 
that $\cF$ is a full subcategory of $\End(M)$ and we can take $N=\rho(M)\subset M$,
where $\rho=\bigoplus_{\rho_i\in\Irr(\cF)} \rho_i$. Thus (2) and (3) are equivalent.

If (2) is true the dual Q-system of the Longo--Rehren inclusion associated with $\cF$ gives 
a commutative Q-system $\Theta=(\theta,w,x)$ in $\Rep^I(\A)$ with $d\theta=\sqrt{\mu_\A}$ 
the corresponding extension $\cB\supset \A$ has $\mu_\cB=1$. 

Conversely, if (1) holds, let $\Theta=(\theta,w,x)$ in $\Rep^I(\A)$ be the Q-system characterizing 
$\cB\supset \A$. The Q-system $\Theta$ is commutative with $d\theta=\sqrt{\Dim \Rep \A}$,
thus a Lagrangian Q-system
which forces $\Rep(\A)\cong Z(\cF)$ for some fusion category $\cF$.

Indeed, for $N:=\A(I)\subset \cB(I):=M$ and $\bim N\cC N=\Rep^I(\A)$ 
 using Proposition \ref{prop:sandwich} (3) 
we get 
$$Z(\cF)=Z(\bim[+]M\cC M)\cong \bim N\cC N\boxtimes \bim[0] M\cC M\cong
  \Rep^I(\A) \boxtimes \Rep^I(\cB) \cong \Rep^I(\A)\,$$
using \cite[Proposition 6.4]{BiKaLo2014} in the second last step.
\end{proof}
\begin{rmk} One might see $\A\subset \cB$ as a generalization of an orbifold by a finite group. 
Namely, if 
$\cF$ is pointed and the fusion rules are given by the finite group $G$, 
then for the associated with $\cF$ associated extension $\cB\supset \A$ from Proposition \ref{prop:Holomorphic} the net $\A =\cB^G$  
is indeed the $G$-orbifold of $\cB$, \ie $\A=\cB^G$, \cf \cite{Mg2005}.
\end{rmk}

Let $N\subset M$ be a finite index and finite depth subfactor. 
 Conjecture \ref{conj:2} is equivalent with the existence of a conformal net $\A$ with 
$\Rep(\A) \cong D(N\subset M)$ for every such $N\subset M$.
Conversely, in the following Proposition we show that if such a net $\A$ exists, there are two extensions 
$\cB_{N}$ and $\cB_{M}$, such that $\cB_{N}(I)\subset\cB_{M}(I)\approx
N\subset M$. But any morphism 
$\beta \colon \cB_{N}(I) \to \cB_{M}(I)$ related to 
$\Rep^I(\A)$, \ie $\bar\iota_{\cB_{M}(I)}
 \circ \beta\circ \iota_{\cB_{N}(I)}\in \Rep^I(\A)$, prescribes a defect line or phase boundary \cite{BiKaLoRe2014} between 
the full conformal field theories $\cB_\mathrm{L}=\cB_{N}\otimes \cB_{N} \supset \A\otimes \A$ and
 $\cB_\mathrm{R}\supset \A\otimes \A$ on 2D Minkowski space, which is invisible if restricted to $\A\otimes \A$,
also called $\A$--topological. Here the net $\cB_R$ comes  from the $\alpha$-induction 
construction \cite{Re2000} of $\A\subset \cB_M$, which coincides with the full center construction \cite{BiKaLo2014}.
Thus  the subfactor $\cB_{N}(I)\subset \cB_{M}(I)\approx N\subset M$ is related to a phase boundary in conformal field theory.
\begin{prop}
  \label{prop:PhaseBoundaries}
  Let $\A$ be a completely rational net with $\Rep(\A)\cong D(N\subset M)$.
  Then there exist $\cB_{N} \supset \A$ local extension with
   $\Rep(\cB_\bullet)\cong \Hilb$ and a (non-local) extension $\cB_{M}\supset \cB_{N}\supset \A$  
  with 
  $\cB_N(I)\subset \cB_M(I) \approx N\subset M$.
  Thus the inclusion $\iota \colon \cB_{N}(I) \to \cB_{M}(I)$ is related to 
  $\Rep^I(\A)$ and prescribes a phase boundary in the sense of \cite{BiKaLoRe2014}.
\end{prop}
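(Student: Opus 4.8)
The plan is to build the two extensions $\cB_N \supset \A$ and $\cB_M \supset \cB_N$ directly from the structural analysis of nets realizing a quantum double, essentially unwinding the proof of \rprop{Holomorphic} and combining it with \rprop{sandwich}. Write $\bim N\cC N := \Rep^I(\A)$ for a fixed interval $I$ and $N := \A(I)$; since $\Rep(\A) \cong D(N_0 \subset M_0)$ for the given subfactor $N_0\subset M_0$, we know $\bim N\cC N$ is a Drinfeld center, hence comes with a Lagrangian Q-system, and by \rprop{Holomorphic}(1) this yields a holomorphic local irreducible extension $\cB_N \supset \A$. That gives the first half of the claim: $\Rep(\cB_N) \cong \Hilb$.

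\textbf{Step 2: recover the subfactor at the level of the net extension.} The point is that the Lagrangian Q-system $\Theta = (\theta, w, x)$ in $\bim N\cC N$ decomposes, via the identification $\Rep(\A) \cong Z(\cF) \cong \cF \boxtimes \rev{\cF}$, in a way that is governed by the even part of $N_0\subset M_0$. Concretely, I would choose a realization $\cF \cong \bim N\cF N \subset \End_0(N)$ of the even part and take $M := \rho(M)\subset M$ from the proof of \rprop{Holomorphic}(2)$\Rightarrow$(3) — or rather, I want the subfactor $N_0\subset M_0$ itself, so I should pick the realization $\cF = \bim[N_0\subset M_0]{N_0}\cF{N_0}$. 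Then the Longo--Rehren/$\alpha$-induction picture (as in the discussion preceding the proposition and in \cite{BiKaLo2014, Re2000}) produces a, possibly non-local, extension $\cB_M \supset \cB_N$ with $\cB_M(I) = $ the von Neumann algebra obtained from $\cB_N(I)$ by the dual Q-system, and one checks $\cB_N(I)\subset \cB_M(I) \approx N_0 \subset M_0$ using that the standard invariant is a complete invariant for hyperfinite type III${}_1$ factors \cite{Po1993}. The net structure on $\cB_M$ is exactly the $\alpha$-induction construction of $\A \subset \cB_M$ \cite{Re2000}, which by \cite{BiKaLo2014} coincides with the full center construction; locality of $\cB_M$ fails in general precisely because the corresponding Q-system over $\A$ is not commutative, which is consistent with $N_0\subset M_0$ being an arbitrary subfactor.

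\textbf{Step 3: identify the boundary condition.} Once $\cB_N(I)\subset \cB_M(I) \approx N_0\subset M_0$ is in place, the inclusion map $\iota\colon \cB_N(I)\to \cB_M(I)$ automatically satisfies $\bar\iota \circ \iota \in \bim[N_0\subset M_0]{N_0}\cF{N_0} \subset \bim N\cC N = \Rep^I(\A)$, i.e. it is related to $\Rep^I(\A)$. By the classification of $\A$-topological defect lines / phase boundaries between the 2D full CFTs $\cB_L = \cB_N \otimes \cB_N \supset \A\otimes\A$ and $\cB_R \supset \A\otimes\A$ given in \cite{BiKaLoRe2014}, such an inclusion prescribes a phase boundary; one invokes that theorem verbatim.

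\textbf{Main obstacle.} The delicate point is Step 2: making sure that the extension $\cB_M$ extracted from the abstract Q-system/$\alpha$-induction data really reproduces the \emph{given} subfactor $N_0\subset M_0$ on the nose (up to standard-invariant equivalence), rather than merely some subfactor with the same even fusion category. This requires keeping careful track of the Morita equivalence $\bim N\cF N \simeq \bim M\cF M$ through the Longo--Rehren construction and the correspondence between intermediate subfactors $\cB_N(I) \subset Q \subset \cB_M(I)$ and subcategories, as in Izumi's Galois correspondence \cite{Iz2000}; once one knows that the dual Q-system of the holomorphic extension $\cB_N \supset \A$ restricts on $\cB_N(I)$ to (a Q-system Morita equivalent to) $\bar\iota\iota$ for $N_0\subset M_0$, uniqueness \cite{Po1993} finishes the job. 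Everything else is an assembling of results already recalled in the excerpt.
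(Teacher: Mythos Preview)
Your outline is essentially correct and lands on the same objects as the paper, but you take a detour around the one concrete identification that makes Step~2 immediate. The paper does not build $\cB_M$ via $\alpha$-induction over $\A$ or invoke Izumi's Galois correspondence; instead, after constructing the holomorphic extension $\cB_N\supset\A$ from the (dual of the) Longo--Rehren Q-system associated with $\cF=\bim[N\subset M]{N}{\cF}{N}$, it uses that the dual category over $B_N=\cB_N(I)$ is
\[
\bim{B_N}{\cC}{B_N}\ \cong\ \cF\boxtimes\cF^{\op},
\]
which is part of the Longo--Rehren picture. Inside this category one simply takes the Q-system $\Theta_{N\subset M}\boxtimes\id$ (the Q-system of the original subfactor on the first factor, trivial on the second) to define $\cB_M\supset\cB_N$. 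Then $\cB_N(I)\subset\cB_M(I)$ and $N\subset M$ have, by construction, equivalent Q-systems and hence the same standard invariant. Your ``main obstacle''---matching the \emph{given} subfactor rather than just something with the same even part---therefore evaporates: the Q-system $\Theta_{N\subset M}$ is exactly the datum that pins down $N\subset M$ among subfactors with even part $\cF$.

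In short: your Step~1 and Step~3 match the paper; your Step~2 would work but is circuitous. Replacing it by the identification $\bim{B_N}{\cC}{B_N}\cong\cF\boxtimes\cF^{\op}$ and the Q-system $\Theta_{N\subset M}\boxtimes\id$ gives the paper's two-line argument and removes the need for the Galois correspondence or Popa's uniqueness at this point.
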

\begin{proof} The dual Q-systems of the Longo-Rehren inclusion associated with 
  $\bim[N\subset M]N\cF N$ gives a commutative Q-systems
  in $D(N\subset M)\cong Z(\bim[N\subset M]N\cF N)\cong Z(\bim[N\subset M]M\cF M)$, which
  we use to define the local extensions $\cB_N\supset\A$.
  Let $A=\A(I)$, $B_N= \cB_N (I)$, then with $\bim A\cC A\cong D(N\subset M)$ 
  we have $\bim {B_N}\cC{B_N} \cong\bim[N\subset M] N\cF N\boxtimes(\bim[N\subset M] N\cF N)^\op$.
  Finally, the Q-system $\Theta_{N\subset M}\boxtimes \id =(\theta_{N \subset M}\boxtimes \id,w_{N\subset M}\boxtimes 1_{\id}, x_{N\subset M} \boxtimes 1_{\id})$ gives an extension $B_M\supset B_N$ which gives a non-local extension $\cB_M\supset \A$,
  where $\Theta_{N\subset M}$ is the Q-system in $\bim[N\subset M]N\cF N$ of the subfactor $N\subset M$.
  Because  $B_N\subset B_N$ and $N\subset M$ have by construction equivalent Q-systems, they have the same standard invariant.
\end{proof}

\subsection{Some conformal nets}
\begin{example}
We denote by $\A_{\SU(2),k}$ or simply by $\A_k$ the $\SU(2)$ loop group net at level $k$ \cite{Wa}, which is completely rational \cite{Xu2000}
and thus gives a UTMC $\Rep(\A_k)$. 
The simple objects are $\{\rho_0,\ldots,\rho_k\}$ with fusion rules
$$
  [\rho_i]\times [\rho_j]=\bigoplus_{\substack{\ell=|i-j|\\i+\ell \text{ even}\\i+j+\ell\leq 2k}}^{i+j}[\rho_\ell].
$$
The dimensions $d\rho_i$ and twists $\omega_{\rho_i}$ are given by
\begin{align*}
  d_i&=d\rho_i =[i+1]_q:=\frac{\sin
    \frac{(i+1)\pi}{k+2}
    }{\sin 
    \frac{\pi}{k+2}
    },
  &
  \omega_i&=\omega_{\rho_i}=\exp\left({2\pi\ima\frac{i(i+2)}{4(k+2)}}\right),&
  q&=\exp\left({\frac{\ima\pi}{k+2}}\right)
\end{align*}
and the central charge $c_k$ and global dimension $D_k$ by
$$
  c_k=\frac{3k}{k+2}, \quad
  D_k=\sum_{i=0}^{k} d_i^2 =\frac{k+2}{2\sin^2 \left(\frac{\pi}{k+2}\right)}\,.
$$
\end{example}

We remember the classification of $\SU(2)_k$ conformal nets \cite{KaLo2004}, \cite{BcEv1998}.
\begin{prop}
\label{prop:ADE}
Local irreducible extensions $\cB\supset \A_{k}$,
\ie a local net $\cB$ containing $\A_k$ as a subnet,
such that $\A_k(I)'\cap \cB(I)=\CC$
are in one-to-one correspondence with 
\ADEe Dynkin diagrams of Coxeter number $k+2$.
The $E_{6,8}$ Dynkin diagram correspond to the conformal inclusions
 $\A_{10}\subset \A_{\Spin(5),1}$ and $\A_{28}\subset \A_{\Gtwo,1}$, respectively.

The subfactor $\alpha^\pm_{\rho_1}(\cB(I))\supset \cB(I)$ has a principal graph the corresponding 
Dynkin diagram. 
\end{prop}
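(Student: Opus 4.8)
The plan is to recall the standard $A$-$D$-$E$ classification of $\mathrm{SU}(2)_k$ modular invariants and module categories, and translate it into the net language using $\alpha$-induction. First I would note that a local irreducible extension $\cB\supset\A_k$ is the same datum as a commutative (indeed, by Rehren's argument, maximal/Lagrangian with respect to the relevant subcategory) Q-system $\Theta=(\theta,w,x)$ in $\Rep^I(\A_k)$; by \cite{KaLo2004} (building on \cite{BcEv1998}) these are classified, up to the natural equivalence, by the type I modular invariants of $\mathrm{SU}(2)_k$, which by Cappelli--Itzykson--Zuber are precisely the $A_{k+1}$, $D_{2n}$ ($k=4n-2$), $E_6$ ($k=10$) and $E_8$ ($k=28$) diagrams — exactly the simply-laced Dynkin diagrams whose Coxeter number is $k+2$. (The $D_{\mathrm{odd}}$ and $E_7$ cases give non-local, or non-type-I, invariants and must be excluded; I would remark on this.) Conversely, each such extension is irreducible because the corresponding modular invariant has $Z_{00}=1$, i.e. $\langle\theta,\id\rangle=1$, which is equivalent to $\A_k(I)'\cap\cB(I)=\CC$.

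Next I would identify the two conformal-inclusion cases. For $E_6$, i.e. $k=10$, the extension is the conformal inclusion $\mathrm{SU}(2)_{10}\subset \mathrm{Spin}(5)_1$; on the net side $\A_{10}\subset\A_{\Spin(5),1}$, which one checks has matching central charge $c_{10}=\tfrac{30}{12}=\tfrac52=\dim(\mathrm{Spin}(5))/2\cdot\tfrac{1}{1+3}\cdot\dots$ — more simply, $c_{10}=5/2$ agrees with the $\mathrm{Spin}(5)_1$ central charge, and $d\theta=\sqrt{D_{10}/D_{\mathrm{Spin}(5)_1}}$ comes out to the index of the conformal inclusion. For $E_8$, i.e. $k=28$, the extension is $\A_{28}\subset\A_{\Gtwo,1}$, again a known conformal inclusion with $c_{28}=\tfrac{84}{30}=\tfrac{14}{5}$ matching $(\mathrm{G}_2)_1$. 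These identifications are classical and I would cite \cite{BcEv1998}, \cite{KaLo2004}; the only thing to verify is that the $E_6$ (resp. $E_8$) vertex of the classification is realized by this particular conformal inclusion rather than some other extension with the same $\theta$ — but for these $k$ the Lagrangian Q-system with the right dimension is unique up to equivalence, so there is nothing more to check.

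Finally, for the principal-graph statement I would invoke the $\alpha$-induction machinery of \cite{BcEvKa2001}, \cite{BcEv1998}: for a local irreducible extension $\A_k\subset\cB$ with dual category $\cD=\bim M\cC M$ and fundamental object $\rho_1$ of $\Rep^I(\A_k)$, the subfactor $\cB(I)\subset\alpha^\pm_{\rho_1}(\cB(I))$ — equivalently the inclusion with Q-system coming from $\alpha^+_{\rho_1}$ — has principal graph the fusion graph of $\alpha^+_{\rho_1}$ acting on the irreducibles of $\cD_+$, and this fusion graph is exactly the corresponding $A$-$D$-$E$ Dynkin diagram. (This is the "$2$-colourable subgraph" / exponents statement: the eigenvalues of the fusion matrix of $\rho_1$ on the extended theory are $2\cos(\pi m_j/(k+2))$ where $m_j$ are the Coxeter exponents of the diagram, forcing the graph to be that Dynkin diagram by the Perron--Frobenius classification of graphs with norm $<2$.) The main obstacle here is purely bookkeeping: keeping straight which graph encodes the $N$-$M$ fusion versus the $M$-$M$ fusion, and checking that the "principal graph" in the sense used here (the $\alpha^+$-induced subfactor) is the one whose graph is the Dynkin diagram itself rather than its square or its ambichiral piece — this is handled by the sectors tables in \cite{BcEvKa2001}, which I would cite rather than reproduce.
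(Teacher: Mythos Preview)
The paper does not actually prove this proposition: it is presented as a recalled classification result, prefaced by ``We remember the classification of $\SU(2)_k$ conformal nets \cite{KaLo2004}, \cite{BcEv1998}'' and given no \texttt{proof} environment. Your sketch is a correct unpacking of what those references establish (commutative Q-systems $\leftrightarrow$ type~I modular invariants $\leftrightarrow$ \ADEe diagrams via Cappelli--Itzykson--Zuber, identification of the $E_{6,8}$ cases with the known conformal inclusions, and the principal-graph statement via the $\alpha$-induction fusion graphs of \cite{BcEv1998,BcEvKa2001}), so it is consistent with the paper's approach, just more explicit.

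Two small slips to fix. First, drop the parenthetical ``(indeed, by Rehren's argument, maximal/Lagrangian with respect to the relevant subcategory)'': a local irreducible extension corresponds merely to a commutative Q-system, and Lagrangian would force $d\theta=\sqrt{\Dim\Rep(\A_k)}$, i.e.\ a holomorphic extension, which none of these are. Second, the $D_{2n}$ level is $k=4n-4$, not $k=4n-2$: the Coxeter number of $D_{2n}$ is $4n-2=k+2$. (Also, since $\alpha^\pm_{\rho_1}\in\End(\cB(I))$, the subfactor is $\alpha^\pm_{\rho_1}(\cB(I))\subset\cB(I)$, not the reverse; the paper's own $\supset$ appears to be a typo.) These are cosmetic; the overall argument is sound.
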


\begin{example}
  The loop group net of $\Spin(2n+1)$ at level 1 $\A_{\Spin(2n+1),1}$ \cite[Theorem 3.1]{Bc1996} and \cite[Lemma 3.1]{Xu2009} has global dimension $D=4$ 
  and  
  has the Ising fusion rules, \ie the same fusion rules as the net $\A_{\SU(2),2}=\A_{\Spin(3),1}$. 
  We denote the (choice of simple objects) by $\{\rho_0,\rho_1,\rho_2\}$. The category 
  is determined by the fusion rules and twists \cite[Proposition 8.2.6]{FrKe1993}, which are:
$$
  \omega_{\rho_1}=\exp\left({\frac{2\pi \ima(2n+1)}{16}}\right),\quad
  \omega_{\rho_2}=-1\,.
$$
\end{example}

\begin{example}
We get a net $\A_{\Gtwo,1}$ associated with $(\Gtwo)_1$ as an extension of $\A_{28}$.
The category of representations is the Fibonacci or golden category with fusion rules $[\tau]\times[\tau]=[\id]+[\tau]$.

There is a conformal inclusion of $\A_{\SU(3),2}\otimes \A_{\SU(3),1}\subset \A_{\Ffour,1}$,
thus $\A_{\Ffour,1}$ is completely rational. There is also $\A_{\Ffour,1}\otimes \A_{\Gtwo,1}\subset \A_{\grE_8,1}$,
in particular $\Rep(\A_{\Gtwo,1})\cong \rev{\Rep(\A_{\Ffour,1})}$, which is an application of Proposition \ref{prop:Mirror}.
\end{example}

\begin{example}
For central charge $c$ with values 
$$  
  c_m=1-\frac6{(m+1)(m+2)}, \quad (m=2,3,\ldots)\,,
$$ 
the \textbf{Virasoro net} $\Vir_{c_m}$ is given by  the coset net of the inclusion $\A_{m}\subset \A_{m-1}\otimes\A_1$
\cite{KaLo2004}, in other words we have the conformal inclusion
$$
  \Vir_{c_m}\otimes \A_{m}\subset \A_{m-1}\otimes\A_1\,.
$$
and $\Vir_{c_m}$ is completely rational, see \cite{KaLo2004}.
\end{example}

\section{Realization of some Quantum Doubles by Conformal Nets}
\label{sec:Realization}
\subsection{Realization of the opposite braiding}
\begin{prop}
  \label{prop:MirrorExtension}
  Let $\A,\tilde \A$ be completely rational conformal nets with $\Rep(\tilde \A)\cong \rev{\Rep(\A)}$
  and $\cB\supset\A$ be an irreducible local extension (which is automatically completely rational).
  Then there is an irreducible local extension
  $\tilde \cB\supset \tilde \A$ with $\Rep(\tilde\cB)\cong \rev{\Rep(\cB)}$. 
\end{prop}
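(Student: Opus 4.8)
The plan is to realize $\tilde\cB$ by transporting the Q-system from $\cB\supset\A$ across the braiding-reversing equivalence. Recall that the irreducible local extension $\cB\supset\A$ is encoded by a commutative Q-system $\Theta=(\theta,w,x)$ in the UMTC $\Rep^I(\A)$, where the commutativity condition reads $\varepsilon^+(\theta,\theta)x=x$ (here $I$ is some fixed interval and $N=\A(I)$). Let $F\colon\Rep^I(\A)\to\Rep^I(\tilde\A)$ be a braided equivalence onto $\rev{\Rep^I(\tilde\A)}$, i.e.\ a unitary tensor equivalence with $F(\varepsilon^+(\rho,\sigma))=\varepsilon^-(F\rho,F\sigma)=\varepsilon^+(F\sigma,F\rho)^\ast$. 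First I would push $\Theta$ through $F$ to obtain $\tilde\Theta=(F\theta,Fw,Fx)$, a Q-system in $\Rep^I(\tilde\A)$. The key observation is that commutativity is preserved: $\varepsilon^+(F\theta,F\theta)\,Fx=F(\varepsilon^-(\theta,\theta))\,Fx=F(\varepsilon^+(\theta,\theta)^\ast)\,Fx$, and since $\varepsilon^+(\theta,\theta)x=x$ implies (by applying $\ast$ and using that $x$ is an isometry onto a subobject, together with the standard fact that for a commutative Q-system the co-multiplication also satisfies $\varepsilon^-(\theta,\theta)x=x$) one gets $F(\varepsilon^-(\theta,\theta))Fx=Fx$. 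Actually the cleanest route is: a commutative Q-system in a braided category is commutative with respect to \emph{both} braidings $\varepsilon^\pm$ (this is immediate from $\varepsilon^+(\theta,\theta)x=x$ and the Q-system axioms, cf.\ the Rehren/Longo–Rehren analysis), so $\tilde\Theta$ is a commutative Q-system in $\Rep^I(\tilde\A)$ regardless.

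Next I would invoke the correspondence between commutative Q-systems in $\Rep^I(\tilde\A)$ and irreducible local extensions of $\tilde\A$ — the same machinery (Longo–Rehren, and the conformal-net version used implicitly in Proposition~\ref{prop:Holomorphic} and Proposition~\ref{prop:PhaseBoundaries}) — to produce a local conformal net $\tilde\cB\supset\tilde\A$ whose defining Q-system is $\tilde\Theta$. Irreducibility of the extension corresponds to the Q-system having a one-dimensional space of self-intertwiners of $\theta$ containing the trivial object with multiplicity one; this property is categorical and so is preserved by $F$. Since $\Rep^I(\tilde\A)$ is a UMTC (as $\tilde\A$ is completely rational), local extensions correspond to commutative (connected, separable) Q-systems, and complete rationality of $\tilde\cB$ is automatic — exactly as stated for $\cB$ in the hypothesis; alternatively it follows from $\mu_{\tilde\cB}=\mu_{\tilde\A}/(d\tilde\theta)^2<\infty$.

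Finally I would identify $\Rep^I(\tilde\cB)$. By the standard description of the representation category of a local extension, $\Rep^I(\cB)\cong\Rep^I(\A)_\Theta^0$, the category of dyslectic (local) $\Theta$-modules in $\Rep^I(\A)$, with braiding induced from $\varepsilon^+$. Likewise $\Rep^I(\tilde\cB)\cong\Rep^I(\tilde\A)_{\tilde\Theta}^0$ with its induced braiding. Since $F$ is an equivalence of tensor categories sending $\Theta$ to $\tilde\Theta$, it induces an equivalence of the module categories $\Rep^I(\A)_\Theta\to\Rep^I(\tilde\A)_{\tilde\Theta}$; and because $F$ reverses the braiding, it sends $\varepsilon^+$-dyslectic modules to $\varepsilon^+$-dyslectic modules (the dyslexia condition $\varepsilon^+(m,\theta)\varepsilon^+(\theta,m)=\mathrm{id}$ becomes $\varepsilon^-\varepsilon^-=\mathrm{id}$, equivalently $\varepsilon^+\varepsilon^+=\mathrm{id}$ again) while reversing the induced braiding on them. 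Hence $F$ restricts to a braided equivalence $\Rep^I(\cB)\to\rev{\Rep^I(\tilde\cB)}$, i.e.\ $\Rep(\tilde\cB)\cong\rev{\Rep(\cB)}$, as desired.

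The main obstacle is the second step: making precise that an abstract commutative Q-system in $\Rep^I(\tilde\A)$ is actually realized by a \emph{conformal} net extension $\tilde\cB\supset\tilde\A$ — one needs the positivity of energy and Möbius covariance to survive, which is where the existence of $\tilde\A$ (rather than a merely algebraic gadget) is essential, and where one cites the established construction of extensions from Q-systems in the conformal-net setting. The categorical bookkeeping in the first and third steps is routine once the dictionary between extensions and (dyslectic modules over) commutative Q-systems is in place.
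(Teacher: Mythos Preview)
Your proposal is correct and follows exactly the paper's approach: transport the commutative Q-system $\Theta$ of $\cB\supset\A$ across the braided equivalence $\Rep(\A)\to\rev{\Rep(\tilde\A)}$ to obtain a commutative Q-system $\tilde\Theta$ in $\Rep(\tilde\A)$, and let $\tilde\cB$ be the corresponding local extension. The paper's proof is a single sentence to this effect; you have simply unpacked the details (preservation of commutativity, the Q-system/extension correspondence, and the identification of $\Rep(\tilde\cB)$ via local modules) that the paper leaves implicit.
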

\begin{proof} Using the equivalence $\Rep(\tilde \A)\cong \rev{\Rep(\A)}$, the commutative Q-system $\Theta \in\Rep(\A)$ gives a 
  commutative Q-system $\tilde \Theta\in \Rep(\tilde \A)$, which defines an extension
  $\tilde \cB \supset \tilde \A$ with the asked properties.
\end{proof}
\begin{rmk} This is a trivial instance of mirror extensions \cite{Xu2007}, namely
take $\cB_{\LR}\supset \A\otimes \tilde \A$ the Longo--Rehren extension \cite{LoRe1995}, which gives
$\Rep(\cB_{\LR})\cong \Hilb$.  Then $\A \subset \cB_{\LR}$ is
normal and co-finite and $\tilde \A$ is its coset and $\tilde \cB\supset \tilde \A$
the mirror extension of $\A\subset \cB$. Using \cite[Proposition 6.4]{BiKaLo2014} 
$\Rep(\tilde\cB)$ is equivalent as UFC with $\Rep(\cB)$ and has the opposite braiding.
\end{rmk}

\begin{prop}
\label{prop:Mirror}
 Let $\cB$ be a holomorphic net and $\A\subset \cB$ be co-finite and normal.
Let $\tilde \A$ be the coset net of the inclusion $\A\subset \cB$.
Then the nets $\A$ and $\A^\mathrm{c}$ are completely rational with 
$\Rep(\A^\mathrm{c})\cong \rev{\Rep(\A)}$.
\end{prop}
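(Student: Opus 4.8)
The plan is to derive this from the mirror-extension framework already set up in the excerpt, specifically Proposition~\ref{prop:MirrorExtension} together with the co-finiteness/normality hypotheses and the theorem of Longo cited after Definition~\ref{def:coset} (complete rationality passes to finite co-finite extensions). First I would record that since $\cB$ is holomorphic it is completely rational with $\mu_\cB=1$; the hypothesis that $\A\subset\cB$ is co-finite means $[\cB(I):\A(I)\otimes\A^\mathrm{c}(I)]<\infty$, so by \cite{Lo2003} both $\A$ and $\A^\mathrm{c}$ are completely rational. This immediately makes $\Rep(\A)$, $\Rep(\A^\mathrm{c})$ UMTCs via \cite{KaLoMg2001}, so the statement ``$\Rep(\A^\mathrm{c})\cong\rev{\Rep(\A)}$'' is at least a well-posed comparison of UMTCs.

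Next I would identify the coset net $\tilde\A$ of $\A\subset\cB$ with $\A^\mathrm{c}$: by Definition~\ref{def:coset} the coset net is literally $\A^\mathrm{c}(I)=\cB(I)\cap\A'$, so $\tilde\A=\A^\mathrm{c}$ is just notation; the content is the braiding statement. The key categorical input is that when $\cB\supset\A\otimes\A^\mathrm{c}$ is a local irreducible (or at least finite-index) extension of a tensor product and $\Rep(\cB)$ is trivial, the corresponding commutative (Lagrangian) Q-system $\Theta$ in $\Rep(\A)\boxtimes\Rep(\A^\mathrm{c})$ forces $\Rep(\A^\mathrm{c})$ to be braided-equivalent to $\rev{\Rep(\A)}$ — this is exactly the ``mirror extension'' phenomenon of Xu \cite{Xu2007} and is precisely what the remark after Proposition~\ref{prop:MirrorExtension} spells out. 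Concretely: a holomorphic extension of $\A\otimes\A^\mathrm{c}$ means $\Rep(\A)\boxtimes\Rep(\A^\mathrm{c})$ contains a Lagrangian algebra; since $\Rep(\A)\boxtimes\rev{\Rep(\A)}$ is the canonical one coming from $Z(\Rep(\A))\cong\Rep(\A)\boxtimes\rev{\Rep(\A)}$ (\cite{Mg2003II}, used in Proposition~\ref{prop:sandwich}(1)), one matches the two to conclude $\Rep(\A^\mathrm{c})\cong\rev{\Rep(\A)}$. The normality hypothesis $\A^{\mathrm{cc}}=\A$ is what guarantees the pair $(\A,\A^\mathrm{c})$ sits symmetrically inside $\cB$ — i.e. that $\A\otimes\A^\mathrm{c}\subset\cB$ is a maximal such product and the Q-system is genuinely Lagrangian rather than lying in a proper subcategory — so that no extra ``defect'' factor appears on the $\A^\mathrm{c}$ side.

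So the steps in order are: (1) complete rationality of $\A,\A^\mathrm{c}$ from co-finiteness and \cite{Lo2003}; (2) observe $\A\otimes\A^\mathrm{c}\subset\cB$ is a finite-index local extension with $\Rep(\cB)\cong\Hilb$, using normality to see it is the full mutual commutant situation; (3) translate this into: the extension is given by a Lagrangian Q-system $\Theta$ in the UMTC $\Rep(\A)\boxtimes\Rep(\A^\mathrm{c})$; (4) invoke the uniqueness-up-to-equivalence of Lagrangian algebras together with $Z(\Rep(\A))\cong\Rep(\A)\boxtimes\rev{\Rep(\A)}$ to read off a braided equivalence $\Rep(\A^\mathrm{c})\cong\rev{\Rep(\A)}$; alternatively invoke Xu's mirror-extension theorem \cite{Xu2007} directly, as the remark after Proposition~\ref{prop:MirrorExtension} does. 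I expect the main obstacle to be step~(4): carefully checking that the Lagrangian Q-system $\Theta\in\Rep(\A)\boxtimes\Rep(\A^\mathrm{c})$ restricts to a full modular subcategory on each tensor factor (so that $\Rep(\A^\mathrm{c})$ is not just Morita-equivalent but honestly braided-reverse-equivalent to $\Rep(\A)$), which is exactly where normality and the co-finiteness index being the full product index get used; once that is in place the rest is bookkeeping with the cited categorical results.
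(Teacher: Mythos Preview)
Your steps (1)--(3) match the paper's argument: complete rationality of $\A$ and $\A^\mathrm{c}$ via \cite{Lo2003}, and passage to the commutative Q-system $\Theta$ of the finite-index local extension $\A\otimes\A^\mathrm{c}\subset\cB$, which is Lagrangian because $\cB$ is holomorphic.

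Step (4) as you state it has a genuine gap. Lagrangian algebras in a UMTC are \emph{not} unique up to equivalence (already $Z(\cF)$ has one for every fusion category Morita-equivalent to $\cF$), so you cannot simply ``match'' an arbitrary Lagrangian in $\Rep(\A)\boxtimes\Rep(\A^\mathrm{c})$ with the canonical one in $\Rep(\A)\boxtimes\rev{\Rep(\A)}$ --- the two ambient categories are not even known to coincide until the proposition is proved. Invoking \cite{Xu2007} directly does not help either: Xu's mirror-extension theorem transports local extensions across a normal coset pair, it does not by itself establish the braided equivalence of the two representation categories.

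The paper closes exactly this gap, and more concretely than via abstract Lagrangian classification. Write $[\theta]=\bigoplus_{\mu,\nu}Z_{\mu,\nu}\,[\mu\otimes\nu]$. Normality is used precisely to obtain $Z_{\mu,\id}=\delta_{\id,\mu}$ and $Z_{\id,\nu}=\delta_{\id,\nu}$. This ``no nontrivial sector couples to the vacuum'' condition forces $\Theta$ to be a \emph{$\phi$-twisted Longo--Rehren} Q-system (in the sense of \cite[Definition~4.1]{BiKaLo2014}) for some braided equivalence $\phi\colon\cC\to\rev{\cD}$ between full replete subcategories $\cC\subset\Rep(\A)$ and $\cD\subset\Rep(\A^\mathrm{c})$. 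A dimension count then finishes: holomorphicity gives $(d\theta)^2=\Dim\Rep(\A)\cdot\Dim\Rep(\A^\mathrm{c})$, while the twisted Longo--Rehren form gives $d\theta=\Dim\cC=\Dim\cD$, so positivity forces $\cC=\Rep(\A)$ and $\cD=\Rep(\A^\mathrm{c})$. You correctly anticipated that ``restriction to a full subcategory on each factor'' is the crux; the missing ingredients are the $Z_{\mu,\id}=\delta_{\id,\mu}$ consequence of normality and the twisted Longo--Rehren characterization that replaces your appeal to Lagrangian uniqueness.
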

\begin{proof}
$\A$ and $\A^\mathrm{c}$ are completely rational by assumption (using \cite{Lo2003} see above). 

The Q-system $\Theta=(\theta,w,x)$ giving the extension $\A\otimes \A^\mathrm{c} \subset \cB$
is of the form 
$$
  [\theta]=\bigoplus_{\substack{\mu\in\Irr(\Rep(\A))\\\nu\in\Irr(\Rep(\A^\mathrm{c}))}} 
  Z_{\mu,\nu} [\mu\otimes \nu]\,.
$$
By normality of $\A,\A^{c}\subset \cB$ we have $Z_{\mu,\id}=\delta_{\id,\mu}$ 
  and $Z_{\id,\nu}=\delta_{\id,\nu}$.
Then it  
follows that there is a braided equivalence $\phi\colon \cC\to \rev{\cD}$, 
for some full and replete subcategories $\cC\subset \Rep(\A)$ and $\cD\subset \Rep(\cB)$, such that $\Theta$
is the by $\phi$ twisted Longo--Rehren extension, see \cite[Definition 4.1]{BiKaLo2014} for the definition.
On the one hand  $(d\theta)^2=\Dim\Rep(\A)\cdot\Dim\Rep(\A^\mathrm{c})$, because   $\cB$ is holomorphic.
On the other hand 
$d\theta=\Dim\C=\Dim \cD$. Together, because all dimensions are positive, this implies $\cC=\Rep(\A)$ and $\cD= \Rep(\A^\mathrm{c})$.
\end{proof} 
Let $\A_k=\A_{\SU(2),k}$ and 
let $\cB_k$ be the coset net of 
$$
  \A_{k} \subset \A_{1}^{\otimes k}=\A_{1}^{\otimes k}\,
$$
which is normal by \cite[Lemma 4.2 (1)]{Xu2007}.
By induction, it follows that we have conformal inclusions:
$$
\A_k \otimes \Vir_{c_2}\otimes \cdots \otimes \Vir_{c_k}\subset
  \A_k \otimes \cB_k  \subset
 \A_{\SU(2),1}^{\otimes k}
$$
thus $\cB_k$ it is completely rational by \cite{Lo2003}.  
Using the conformal inclusion $\A_{\grE_7,1}\otimes \A_1\subset \A_{\grE_8,1}$,
 which are all conformal nets associated with 
even lattices (\cf \cite{Bi2012}) and 
which is a Longo--Rehren extension and thus normal
we get the conformal inclusions:
$$
  \A_{k}\otimes \cB_k \otimes \A_{\grE_7,1}^{\otimes k} \subset \A_{A_1}^{\otimes k}\otimes \A_{\grE_7,1}^{\otimes k} \subset \A_{\grE_8,1}^{\otimes k}\,.
$$
Now we take $\tilde \A_{k}$ to be the coset of the normal inclusion \cite[Lemma 4.2 (1)]{Xu2007} $\A_{k}\subset \A_{\grE_8,1}^k$. This is completely rational, because it is an intermediate net of completely rational nets:
$$\A_{k}\otimes \cB_k \otimes \A_{\grE_7,1}^{\otimes k} \subset \A_k\otimes \tilde \A_k  \subset \A_{\grE_8,1}^{\otimes k} \,.$$
Thus using Proposition \ref{prop:Mirror} we have proven:
\begin{prop} 
  \label{prop:Akmirror}
  The coset net $\tilde \A_k$ of the inclusion $\A_k\subset \A^{\otimes k}_{\grE_8,1}$ above is completely rational
  with $\Rep(\tilde \A_k)\cong \rev{\Rep(\A_k)}$.
\end{prop}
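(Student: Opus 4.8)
The plan is to deduce the statement directly from Proposition~\ref{prop:Mirror}, after checking that the inclusion $\A_k\subset\A_{\grE_8,1}^{\otimes k}$ satisfies the hypotheses of that proposition. Here $\A_{\grE_8,1}$ is the lattice conformal net of the even self-dual root lattice of $\grE_8$, hence holomorphic (\cf \cite{Bi2012}), so $\A_{\grE_8,1}^{\otimes k}$ is holomorphic as well; and $\A_k\subset\A_{\grE_8,1}^{\otimes k}$ is normal by \cite[Lemma~4.2(1)]{Xu2007}. Thus, once I know that $\A_k$ and its coset $\tilde\A_k$ are completely rational and that the inclusion is co-finite, Proposition~\ref{prop:Mirror} applies with $\cB=\A_{\grE_8,1}^{\otimes k}$ and $\A=\A_k$ (so that $\A^\mathrm{c}=\tilde\A_k$) and yields at once that $\Rep(\tilde\A_k)\cong\rev{\Rep(\A_k)}$. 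The role of the chain of conformal inclusions assembled above is precisely to provide these inputs.

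So the work is to run through that chain. $\A_k$ is completely rational by \cite{Xu2000}. Iterating the conformal inclusions $\A_m\subset\A_{m-1}\otimes\A_1$, whose cosets are the completely rational Virasoro nets $\Vir_{c_m}$ (\cite{KaLo2004}), one gets $\A_k\otimes\Vir_{c_2}\otimes\cdots\otimes\Vir_{c_k}\subset\A_k\otimes\cB_k\subset\A_1^{\otimes k}$; since complete rationality is stable under tensor products, both outer nets are completely rational, hence so is the finite-index intermediate net $\A_k\otimes\cB_k$, and therefore so is its coset $\cB_k$ (apply \cite{Lo2003} to the index-one, co-finite inclusion $\A_k\subset\A_k\otimes\cB_k$). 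As $\A_{\grE_7,1}$ is a lattice net and hence completely rational, the product $\A_k\otimes\cB_k\otimes\A_{\grE_7,1}^{\otimes k}$ is completely rational; and the conformal inclusions $\A_k\otimes\cB_k\otimes\A_{\grE_7,1}^{\otimes k}\subset\A_1^{\otimes k}\otimes\A_{\grE_7,1}^{\otimes k}\subset\A_{\grE_8,1}^{\otimes k}$ (the last being the $k$-th tensor power of $\A_1\otimes\A_{\grE_7,1}\subset\A_{\grE_8,1}$) place it with finite index inside the holomorphic net $\A_{\grE_8,1}^{\otimes k}$, by \cite{KaLoMg2001}. Consequently $\A_k\otimes\tilde\A_k$ sits with finite index between the completely rational nets $\A_k\otimes\cB_k\otimes\A_{\grE_7,1}^{\otimes k}$ and $\A_{\grE_8,1}^{\otimes k}$, so it is completely rational (\cite{Lo2003}) and $\A_k\subset\A_{\grE_8,1}^{\otimes k}$ is co-finite; a final application of \cite{Lo2003}, now to $\A_k\subset\A_k\otimes\tilde\A_k$, gives that $\tilde\A_k$ is completely rational.

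With these inputs in place, Proposition~\ref{prop:Mirror} finishes the proof. The only ingredients that are not pure bookkeeping are the net-level coset identifications $\A_m\subset\A_{m-1}\otimes\A_1$ with coset $\Vir_{c_m}$ (\cite{KaLo2004}), the conformal inclusion $\A_{\grE_7,1}\otimes\A_1\subset\A_{\grE_8,1}$, and the normality from \cite{Xu2007} --- all already on record --- so the main obstacle is purely organizational: keeping the finite-index and co-finiteness bookkeeping consistent so that \cite{Lo2003} and Proposition~\ref{prop:Mirror} can legitimately be invoked. There is no new analytic content, which is why the statement follows so quickly from the preceding construction.
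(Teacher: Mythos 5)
Your proposal is correct and follows essentially the same route as the paper: it verifies the hypotheses of Proposition \ref{prop:Mirror} for $\A_k\subset\A_{\grE_8,1}^{\otimes k}$ by exactly the chain of conformal inclusions $\A_k\otimes\cB_k\otimes\A_{\grE_7,1}^{\otimes k}\subset\A_1^{\otimes k}\otimes\A_{\grE_7,1}^{\otimes k}\subset\A_{\grE_8,1}^{\otimes k}$, using \cite{Lo2003} for complete rationality of intermediate nets and cosets and \cite[Lemma 4.2(1)]{Xu2007} for normality. The only difference is that you spell out the co-finiteness and complete-rationality bookkeeping in slightly more detail than the paper does.
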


\begin{example}
We note that $\tilde \A_1=\A_{\grE_7,1}$ and that  
  $\Vir_{c_k}\otimes \tilde \A_1\otimes \tilde \A_{k-1}\otimes \A_k\subset \A_{\grE_8,1}^{\otimes k}$. We get the intermediate inclusion:
$$
  \Vir_{c_k}\otimes \tilde \A_1\otimes \tilde \A_{k-1}\otimes \A_k\subset \tilde\A_k\otimes \A_k \subset \A_{\grE_8,1}^{\otimes k} 
$$
Thus also $\Vir_{c_k}\otimes\tilde \A_{k-1}\otimes\tilde\A_1\subset\tilde \A_k$ and $\Vir_{c_k}$ can be obtained 
back from the coset of $\tilde\A_{k-1}\subset \tilde\A_1\otimes \tilde\A_k$.
We also get 
that $\Vir_{c_m}\subset \A_{\grE_8,1}^{\otimes m}$ is normal and co-finite, and thus its coset
$\tilde \Vir_{c_m}=\Vir_{c_m}^\mathrm{c}$ realizes the opposite braiding of $\Vir_{c_m}$.
Further, $\Vir_{c_m}\otimes\tilde\Vir_{c_m}$ realizes, using Proposition \ref{prop:sandwich}(1), the Drinfeld center $Z(\Rep^I(\Vir_{c_m}))$.
\end{example}

\subsection{Realization of quantum doubles}
The next proposition shows, that if a subfactor $N\subset M$ arises from $\alpha$-induction of 
a local irreducible extension $\A\subset \cB$ and we have a net $\tilde \A$ realizing 
the opposite braiding of $\A$, then the there is a net $\cB_{N\subset M}$ with 
$\Rep(\cB_{N\subset M})=\cD(N\subset M)$. 
\begin{prop} 
  \label{prop:GaloisDouble}
  Let $N\subset M$ be an irreducible subfactor.  
  Assume there exists  a completely rational conformal net $\A$ and an irreducible local extension 
  $\cB\supset \A$,
  such that $N\subset M$ arises by $\alpha^\pm$ induction, \ie
  there is a  $\rho\in \bim{\A(I)}{\cC}{\A(I)}=\Rep^I(\A)$
  and a $[\beta]\prec[\alpha_\rho^{\pm}]$,
  such that $\beta(\cB(I))\subset \cB(I) \sim N\subset M$. 
Further, assume there exists $\tilde \A$, a completely rational conformal net with $\Rep(\tilde \A)\cong \rev{\Rep(\A)}$.
Then 
\begin{enumerate}
  \item There exists a completely rational conformal net $\cB_{N\subset M}$ realizing the 
    quantum double $D(\N\subset M)$, \ie $\Rep(\cB_{N\subset M})\cong D(N\subset M)$.
  \item
    It can be given as a local irreducible extension: 
   \begin{itemize}
      \item $\cB_{N\subset M}\supset \A\otimes \tilde\cB$,
  in the $\alpha^+$ case 
  or
    \item  $\cB_{N\subset M}\supset \tilde \A \otimes \cB$ in the $\alpha^-$ case.
    \end{itemize}
  \item In the case that $[\beta],[\bar\beta]$ (tensor) generate $\bim[\pm]{\cB(I)}\cC{\cB(I)}$, 
 but $[\bar\beta\circ\beta]$ does not,
  the former extension is a $\ZZ_2$--simple current extension.
  \item In the case that $[\bar\beta\circ\beta]$  (tensor) generates $\bim[\pm]{\cB(I)}\cC{\cB(I)}$, then
   $\cB_{N\subset M}$ equals 
   $\A\otimes \tilde \cB$ or $\tilde \A \otimes \cB$, respectively.
\end{enumerate}
\end{prop}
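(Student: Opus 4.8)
The plan is to realize $D(N\subset M)$ as the representation category of an extension of $\A \otimes \tilde\A$ built from two layers of Q-systems, one coming from the local extension $\cB \supset \A$ and one from the twisted Longo--Rehren construction. I treat the $\alpha^+$ case; the $\alpha^-$ case is symmetric after swapping the roles of $\A$ and $\tilde\A$ (equivalently, replacing $\Rep(\A)$ by $\rev{\Rep(\A)}$). Write $\cC=\Rep^I(\A)$, $\cD=\Rep^I(\cB)$, and let $\Theta_\cB \in \cC$ be the commutative Q-system giving $\cB\supset\A$. Since $N\subset M$ arises via $\alpha^+$-induction from a $\beta \prec \alpha^+_\rho$ acting on $M=\cB(I)$, the even part $\bim N\cF N$ of $N\subset M$ is a full subcategory of $\bim[+]{\cB(I)}\cC{\cB(I)} = \cD_+$. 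By Proposition \ref{prop:sandwich}(3), $Z(\cD_+) \cong \cC \boxtimes \rev{\cD_0}$, and since $\cD_0$ is the ambichiral category which by \cite{BiKaLo2014} is realized as $\Rep^I$ of an extension of $\cB$ coming from $\alpha$-induction, one sees $Z(\bim N\cF N) = D(N\subset M)$ sits inside $\cC \boxtimes \rev{\cC}$ via the canonical embedding, because $\bim N\cF N \subset \cD_+$ forces $D(N\subset M) = Z(\bim N\cF N)$ to be a full modular subcategory of $Z(\cD_+) \hookrightarrow Z(\cC) \cong \cC \boxtimes \rev{\cC}$.

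Next I would produce the net. Using $\Rep(\tilde\A)\cong\rev{\cC}$, the net $\A\otimes\tilde\A$ has $\Rep^I(\A\otimes\tilde\A) \cong \cC\boxtimes\rev{\cC} \cong Z(\cC)$. The Longo--Rehren Q-system in $\cC\boxtimes\rev{\cC}$ associated to the subfactor data of $\bim N\cF N$ (via Izumi's description, \cite[Section 6]{Mg2003II}) is commutative and Lagrangian with respect to the full subcategory $D(N\subset M) \subset Z(\cC)$; restricting it, it is a commutative Q-system $\Theta_{N\subset M}$ whose dual category is exactly $D(N\subset M)$. Because $D(N\subset M)$ is a full subcategory of $\Rep^I(\A\otimes\tilde\A)$, this Q-system defines a local irreducible extension $\cB_{N\subset M}\supset \A\otimes\tilde\A$, and by the standard correspondence (as used in Proposition \ref{prop:Holomorphic}) $\Rep^I(\cB_{N\subset M})$ is the category of local modules, which is $D(N\subset M)$. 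Complete rationality follows since it is a finite-index extension of the completely rational net $\A\otimes\tilde\A$ (\cite{Lo2003} / \cite{KaLoMg2001}). To get the claimed form $\cB_{N\subset M}\supset \A\otimes\tilde\cB$, I would factor this Q-system: the part of $\Theta_{N\subset M}$ living in the $\rev{\cC}$-tensor-factor that generates $\rev{\cD}$ gives an intermediate extension $\A\otimes\tilde\cB \supset \A\otimes\tilde\A$ (applying Proposition \ref{prop:MirrorExtension} to transport $\Theta_\cB$ to $\tilde\A$), and $\cB_{N\subset M}$ is the remaining extension on top, corresponding under $\phi$ to the Q-system built from $[\beta]$.

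For parts (3) and (4), I would analyze this remaining Q-system $\Theta'$ over $\A\otimes\tilde\cB$ concretely. Its underlying object is $\bigoplus_i [\gamma_i \otimes \widetilde{\gamma_i}]$ where $\gamma_i$ runs over the irreducibles of the fusion category $\langle[\beta],[\bar\beta]\rangle$ generated by $[\beta]$ inside $\cD_+$, paired with their mirror images in $\rev{\cD}$; this is the twisted Longo--Rehren Q-system of $\bim N\cF N$ viewed over $\cD \boxtimes \rev{\cD}$. When $[\beta],[\bar\beta]$ generate $\cD_+ = \bim[+]{\cB(I)}\cC{\cB(I)}$ but $[\bar\beta\circ\beta]$ does not, the even part $\bim N\cF N = \langle[\bar\beta\circ\beta]\rangle$ has index $2$ in $\langle[\beta],[\bar\beta]\rangle = \cD_+$, and the corresponding relative extension is a $\ZZ_2$-graded one; pulling this through the construction shows $\cB_{N\subset M}$ is a $\ZZ_2$-simple-current extension of $\A\otimes\tilde\cB$ (the simple current being the image of the odd-graded invertible object). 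When $[\bar\beta\circ\beta]$ already generates $\cD_+$, we have $\bim N\cF N = \cD_+$ and the Q-system $\Theta'$ is the full (untwisted, up to $\phi$) Longo--Rehren Q-system of $\cD$, whose extension of $\A\otimes\tilde\cB \sim$ (an extension of) $\cB\otimes\tilde\cB$ is already the holomorphic/center-realizing net, i.e. nothing further is added and $\cB_{N\subset M} = \A\otimes\tilde\cB$ (resp. $\tilde\A\otimes\cB$). The main obstacle I anticipate is the bookkeeping in the factorization step: making precise that the two-step Q-system over $\A\otimes\tilde\A$ genuinely decomposes as ``extend $\tilde\A$ to $\tilde\cB$, then extend $\A\otimes\tilde\cB$,'' which requires matching the twisted Longo--Rehren data of \cite[Definition 4.1]{BiKaLo2014} across the equivalences $\Rep(\tilde\A)\cong\rev\cC$ and $\Rep(\tilde\cB)\cong\rev\cD$, and checking associativity of the iterated Q-system; and then, for (3)–(4), identifying exactly which $\ZZ_2$ (or trivial) grading on $\cD_+/\bim N\cF N$ controls the final extension.
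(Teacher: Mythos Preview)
Your approach has a genuine gap in the first paragraph. You assert that $\bim N\cF N \subset \cD_+$ forces $Z(\bim N\cF N)$ to sit inside $Z(\cD_+)$ as a full modular subcategory, and that $Z(\cD_+)\hookrightarrow Z(\cC)$. Neither holds in general. The Drinfeld center is not functorial on fusion subcategories: for a subgroup $H\subset G$, the inclusion of $H$-graded into $G$-graded vector spaces does \emph{not} induce a braided embedding of their centers. The correct relation runs the other way: a fusion subcategory $\cG\subset\cF$ corresponds to a commutative algebra in $Z(\cF)$ whose category of \emph{local modules} is $Z(\cG)$. Likewise $Z(\cD_+)\cong\cC\boxtimes\rev{\cD_0}$ is not a braided subcategory of $Z(\cC)\cong\cC\boxtimes\rev\cC$, since $\cD_0$ is the category of local $\Theta$-modules in $\cC$, not a braided subcategory of $\cC$. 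Consequently there is no natural way to place the Longo--Rehren Q-system of $\bim N\cF N$ inside $\Rep(\A\otimes\tilde\A)$, and your ``restriction'' step is undefined as written; the subsequent factorization through $\A\otimes\tilde\cB$ then has nothing to factor.

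The paper bypasses all of this by never descending to $\A\otimes\tilde\A$. It works directly over $\A\otimes\tilde\cB$, where Proposition~\ref{prop:sandwich}(3) gives $\Rep(\A\otimes\tilde\cB)\cong Z(\cD_+)$ on the nose. The Longo--Rehren extension of $\A\otimes\tilde\cB$ associated to the fusion category $\cD_+$ is a holomorphic net $\cB_{\mathrm{holo}}$, and Izumi's Galois correspondence between subcategories of $\cD_+$ and intermediate extensions $\A\otimes\tilde\cB\subset\cB_\bullet\subset\cB_{\mathrm{holo}}$ immediately produces the intermediate net $\cB_{N\subset M}$ attached to $\langle\bar\beta\circ\beta\rangle\subset\cD_+$, with $\Rep(\cB_{N\subset M})\cong Z(\langle\bar\beta\circ\beta\rangle)=D(N\subset M)$. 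Parts (3) and (4) then reduce to the single dimension count $[\cB_{N\subset M}(I):(\A\otimes\tilde\cB)(I)]=\Dim\cD_+/\Dim\langle\bar\beta\circ\beta\rangle$, which equals $2$ (hence a $\ZZ_2$--simple current extension) or $1$ under the respective hypotheses. This Galois step is exactly what your two-layer factorization over $\A\otimes\tilde\A$ is attempting to rebuild by hand; starting one level up makes it a one-liner.
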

\begin{proof}
  By Proposition \ref{prop:sandwich} we have
  $\Rep(\A\otimes \tilde\cB) \cong Z(\bim[+]{\cB(I)}\cC{\cB(I)})$ 
  and
  $\Rep(\tilde \A\otimes \cB) \cong Z(\bim[-]{\cB(I)}\cC{\cB(I)})$.
  
  Let  $\bim[\beta]{\cB(I)}\cC{\cB(I)}\subset  \bim[\pm]{\cB(I)}\cC{\cB(I)}$ be the subcategory (tensor) generated 
  by $\bar\beta\circ\beta$, then by $D(N\subset M)\cong Z( \bim[\beta]{\cB(I)}\cC{\cB(I)})$ by assumption.

  Further, there is a holomorphic net $\cB_\mathrm{holo} \supset \A\otimes \tilde\cB$
  or
  $\cB_\mathrm{holo} \supset  \tilde \A\otimes \cB$, respectively,  
  which is the Longo--Rehren inclusion and by 
  Galois correspondence there is an intermediate net  $\cB_{N\subset M}$
  with $\Rep(\cB_{N\subset M})\cong Z(\bim[\beta]{\cB(I)}\cC{\cB(I)})
  \cong D(N\subset M)$.

  In the case of (2) we have $2\Dim   \bim[\beta]{\cB(I)}\cC{\cB(I)} = \Dim \bim[\pm]{\cB(I)}\cC{\cB(I)} $
  and 
  $\cB_\mathrm{holo} \supset \A\otimes \tilde\cB$
  or
  $\cB_\mathrm{holo} \supset  \tilde \A\otimes \cB$, respectively,  
  have index two, thus it is a $\ZZ_2$--simple current extensions.

  In the case of (3) we have $\bim[\beta]{\cB(I)}\cC{\cB(I)} =\bim[\pm]{\cB(I)}\cC{\cB(I)}$, respectively
  and the extension is trivial.
\end{proof}

For subfactors with index $<4$ it is well-known that they arise via
$\alpha$-induction from $\SU(2)_k$ loop group models $\A_k$, see Proposition \ref{prop:ADE}.  Together 
with $\tilde \A_k$ from Proposition \ref{prop:Akmirror}
we thus get:
\begin{cor}
  \label{cor:Doubles}
  For every subfactor $N\subset M$ with $[M:N]<4$, \ie 
  for every the standard invariant label by $G\in\{A_n,D_{2n},E_{6,8},\bar E_{6,8}\}$, there is a conformal net 
  $\A_{N\subset M}$
  with $\Rep(\A_{N\subset M})=D(N\subset M)$. The realizations can be given as follows:
  \begin{description}
  \item[$A_{k+1}$] $\A_{k}\otimes \tilde \A_k \myrtimes_{\rho_{k,k}} \ZZ_2$
    the simple current extension with respect to the automorphism $\rho_k\otimes\tilde\rho_k$.
  \item[$D_{2n}$] $\cB_{D_{2n}}\otimes \tilde\cB_{D_{2n}}$, where 
    $\cB_{D_{2n}}$ and $\tilde \cB_{D_{2n}}$ are the $\ZZ_2$--simple current extensions
    of $\A_{4n-4}$ and $\tilde \A_{4n-4}$ by $\rho_{4n-4}$ and $\tilde \rho_{4n-4}$, respectively.
  \item[$E_6$] $\A_{10}\otimes \A_{\Spin(11),1} \myrtimes_{[\rho_{10,2}]} \ZZ_2$, where we can 
    replace $\A_{\Spin(11),1}$ by $\tilde\cB\supset\tilde \A_{10}$, the 
        extension obtained from Proposition \ref{prop:MirrorExtension} applied to $\A_{10}\subset  \A_{\Spin(5),1}$.
  \item[$\bar E_6$] $\tilde \A_{10}\times \A_{\Spin(5),1} \myrtimes_{[\rho_{10,2}]} \ZZ_2$.
  \item[$E_8$] $\A_{28}\otimes \A_{\Ffour,1} \myrtimes_{[\rho_{28,0}]} \ZZ_2$,
    \ie it is given by $\cB_{D_{16}}\otimes \A_{\Ffour,1}$. We 
       can replace $\A_{\Ffour1}$ by the $\tilde\cB\supset\tilde \A_{28}$, the 
        extension obtained from Proposition \ref{prop:MirrorExtension} applied
      to $\A_{28}\subset  \A_{\Gtwo,1}$. 
  \item[$\bar E_8$] 
     $\tilde \A_{28}\otimes \A_{\Gtwo,1} \myrtimes_{[\rho_{28,0}]} \ZZ_2$
    \ie it is given by $\tilde \cB_{D_{16}}\otimes \A_{\Gtwo,1}$.
  \end{description}
\end{cor}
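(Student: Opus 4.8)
The plan is to deduce Corollary \ref{cor:Doubles} by feeding, for each Dynkin diagram label $G$, the concrete data of the \ADEe classification (Proposition \ref{prop:ADE}) into the machine of Proposition \ref{prop:GaloisDouble}, using the nets $\tilde\A_k$ with opposite braiding supplied by Proposition \ref{prop:Akmirror}. First I recall that every subfactor of index $<4$ has standard invariant labelled by one of $A_{k+1}$, $D_{2n}$, $E_6$, $E_8$, $\bar E_6$, $\bar E_8$, and that by Proposition \ref{prop:ADE} each such subfactor arises as $\alpha^{\pm}_{\rho_1}(\cB(I))\supset \cB(I)$ where $\cB\supset\A_k$ is the corresponding local irreducible extension ($\cB=\A_k$ itself in the $A$ case, the $\ZZ_2$-simple current extension of $\A_{4n-4}$ in the $D_{2n}$ case, the conformal inclusions $\A_{10}\subset\A_{\Spin(5),1}$ and $\A_{28}\subset\A_{\Gtwo,1}$ in the $E_{6}$, $E_8$ cases). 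In each case the hypotheses of Proposition \ref{prop:GaloisDouble} are met: $\A=\A_k$ is completely rational, $\cB\supset\A_k$ is an irreducible local extension, and $\tilde\A=\tilde\A_k$ from Proposition \ref{prop:Akmirror} is a completely rational net with $\Rep(\tilde\A_k)\cong\rev{\Rep(\A_k)}$. Hence Proposition \ref{prop:GaloisDouble}(1) immediately gives a completely rational net $\A_{N\subset M}$ with $\Rep(\A_{N\subset M})\cong D(N\subset M)$, and parts (2)--(4) tell us the explicit form of that net.

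Next I would check, label by label, which alternative of Proposition \ref{prop:GaloisDouble}(3)--(4) applies, i.e.\ whether $[\bar\beta\circ\beta]$ generates the full $\alpha^{\pm}$-induced category or only an index-two subcategory. For $A_{k+1}$ one takes $\cB=\A_k$, $\beta=\alpha^+_{\rho_1}$; here $\bim[+]{\cB(I)}\cC{\cB(I)}=\Rep^I(\A_k)$ and the relevant $\ZZ_2$ subcategory generated by $\bar\beta\circ\beta$ has index two, realized by the simple current $\rho_k$ (the spin-$k/2$ object, an automorphism), giving $\A_k\otimes\tilde\A_k\myrtimes_{\rho_{k,k}}\ZZ_2$. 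For $D_{2n}$ one starts from the extension $\cB_{D_{2n}}\supset\A_{4n-4}$ and again gets a $\ZZ_2$-simple current extension, which by associativity of iterated simple current extensions collapses to $\cB_{D_{2n}}\otimes\tilde\cB_{D_{2n}}$. For $E_6$ the dual even part $\bim[+]{\cB(I)}\cC{\cB(I)}$ has the Ising fusion rules with the $\ZZ_2$ generated by $\bar\beta\circ\beta$ of index two, so one gets $\A_{10}\otimes\A_{\Spin(11),1}\myrtimes_{[\rho_{10,2}]}\ZZ_2$, identifying $\A_{\Spin(11),1}$ with an extension of $\tilde\A_{10}$ via Proposition \ref{prop:MirrorExtension} applied to the conformal inclusion $\A_{10}\subset\A_{\Spin(5),1}$ — this is exactly where the ``not exotic'' claim of the abstract falls out. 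For $E_8$ the ambichiral category $\cD_0$ is trivial (Fibonacci $\otimes$ trivial situation), so Proposition \ref{prop:GaloisDouble}(4) applies and $\A_{N\subset M}=\A_{28}\otimes\A_{\Ffour,1}$, with $\A_{\Ffour,1}\cong$ an extension of $\tilde\A_{28}$ coming from $\A_{28}\subset\A_{\Gtwo,1}$ and $\Rep(\A_{\Gtwo,1})\cong\rev{\Rep(\A_{\Ffour,1})}$. The barred cases $\bar E_6,\bar E_8$ are handled by using $\alpha^-$-induction instead of $\alpha^+$, equivalently by swapping the roles of $\A_k$ and $\tilde\A_k$, which is precisely Proposition \ref{prop:GaloisDouble}(2) in its second form.

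The main obstacle is the bookkeeping in the $E_6$ and $E_8$ cases: one must correctly identify the dual even part $\bim[\pm]{\cB(I)}\cC{\cB(I)}$ and the subcategory generated by $[\bar\beta\circ\beta]$, and then recognize the resulting holomorphic-net ingredients ($\A_{\Spin(11),1}$, $\A_{\Ffour,1}$) as mirror extensions of $\tilde\A_{10}$, $\tilde\A_{28}$ — this requires knowing the fusion rules and twists of $\A_{\Spin(2n+1),1}$ and the conformal inclusion $\A_{\Ffour,1}\otimes\A_{\Gtwo,1}\subset\A_{\grE_8,1}$ recorded among the examples above, together with the uniqueness of the relevant commutative Q-system. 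A secondary but routine point is verifying that the iterated simple current / Longo--Rehren extensions in the $D_{2n}$ and $E_8$ descriptions are associative, so that $\cB_{N\subset M}$ can be written as a tensor product of two chiral nets rather than as a genuine extension; this follows since in those cases $[\bar\beta\circ\beta]$ already generates the full induced category, so the intermediate net produced by the Galois correspondence in Proposition \ref{prop:GaloisDouble} is the whole tensor product $\A\otimes\tilde\cB$ (resp.\ $\tilde\A\otimes\cB$). Everything else is a direct application of Propositions \ref{prop:sandwich}, \ref{prop:GaloisDouble}, \ref{prop:Akmirror} and \ref{prop:MirrorExtension}.
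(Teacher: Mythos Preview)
Your overall framework is exactly the paper's: feed Proposition \ref{prop:ADE} and Proposition \ref{prop:Akmirror} into Proposition \ref{prop:GaloisDouble}. However, your case analysis for $E_8$ (and your final paragraph on $D_{2n}$ and $E_8$) is wrong, and it produces the wrong net.

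The paper's proof is uniform: in \emph{every} case one takes $\beta=\alpha^{\pm}_{\rho_1}$, and since $\rho_1$ sits at an odd vertex of the bipartite Dynkin diagram, $[\bar\beta\circ\beta]$ generates only the even half of $\bim[\pm]{\cB(I)}\cC{\cB(I)}$ while $[\beta]$ generates all of it. Hence case (3) of Proposition \ref{prop:GaloisDouble} applies uniformly, and in each instance there is a unique $\ZZ_2$--simple current extension, which one then identifies. Your claim that for $E_8$ ``the ambichiral category $\cD_0$ is trivial'' is false: $\cD_0\cong\Rep(\A_{\Gtwo,1})$ is Fibonacci, not trivial. More importantly, your assertion that case (4) applies and that $\A_{N\subset M}=\A_{28}\otimes\A_{\Ffour,1}$ contradicts the very statement you are proving, which says $\A_{N\subset M}=\A_{28}\otimes\A_{\Ffour,1}\myrtimes_{[\rho_{28,0}]}\ZZ_2=\cB_{D_{16}}\otimes\A_{\Ffour,1}$. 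Likewise, your last paragraph's claim that ``in those cases $[\bar\beta\circ\beta]$ already generates the full induced category'' is incorrect for both $D_{2n}$ and $E_8$.

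The reason the $D_{2n}$ and $E_8$ answers nevertheless look like pure tensor products is not that the extension is trivial, but that the unique $\ZZ_2$ simple current in $\Rep(\A\otimes\tilde\cB)$ happens to be of the form $\rho_{k}\otimes\id$ (for $E_8$ this is $\rho_{28,0}$), so the $\ZZ_2$ extension acts only on the first tensor factor and turns $\A_k$ into $\cB_{D_{2n}}$ resp.\ $\cB_{D_{16}}$. Once you correct this, your argument coincides with the paper's.
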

\begin{proof} 
  All subfactors arise as 
  $\alpha^{\pm}_{\rho_1}(\cB_G(I))\subset \cB_G(I)$, where $\cB_G\supset \A_k$ 
  is the extension in
  Proposition \ref{prop:ADE}.
  Further $\bar \alpha_{\rho_1}^\pm$
  generates $\bim[\pm]{\cB(I)}{\cC}{\cB(I)}$, while $\bar\alpha^\pm_{\rho_1}\circ\alpha_{\rho_1}^\pm$
  does not. Thus in  each case we are in the situation of case (2) 
  of Proposition  
  \ref{prop:GaloisDouble}
  and in each case there is just one possible $\ZZ_2$--simple current extension.
\end{proof}

\begin{rmk}
Our method also applies to some subfactors with index between $4$ and $5$:
\begin{itemize}
  \item The  GHJ subfactor \cite{GoHaJo1989}
  with index $3+\sqrt{3}$ arises as the subfactor $\A_{10}(I)\subset \A_{\Spin(5),1}(I)$, 
  see \cite[Section 2.2]{BcEvKa1999}. Thus the even part
  of it coincides with the even part of $\Rep(\A_{10})$, \ie with the even part of 
  the $A_{11}$ subfactor. Thus its quantum double is the same as of the $A_{11}$ subfactor
  and thus also realized by $\A_{10}\otimes\tilde \A_{10}\rtimes_{\rho_{10,10}} \ZZ_2$.
  \item The 2221 subfactor with index $(5+\sqrt{21})/2$ arises from the conformal inclusion $\A_{\Gtwo,3}\subset \A_{\grE_6,1}$ 
  by $\alpha$-induction
  \cite{XuUnpublished}, see also
\cite[Appendix]{CaMoSn2011}. 
The subfactor was also constructed by Izumi in
 \cite{Iz2000}.
Note that $\Rep(\A_{\SU(3),1})\cong\rev{\Rep(  \A_{\grE_6,1})}$, thus
  by Proposition \ref{prop:GaloisDouble} (3)
  the net $\A_{\Gtwo,3}\otimes \A_{\SU(3),1}$ realizes its quantum double.
  A similar observation was made by
  Ostrik \cite[Remark A.4.3]{CaMoSn2011}. 
  The complex conjutage should be realized by
  $\tilde \A_{\Gtwo,3}\otimes \A_{\grE_6,1}$, but we do not know how to realize 
  the net $\tilde \A_{\Gtwo,3}$.
\end{itemize}
\end{rmk}

\subsection{Modular invariants}
All our examples in Corollary \ref{cor:Doubles} are $\ZZ_2$--simple current extension.
We remember that for $\A\subset \cB$ an extension, $N=\A(I)\subset \cB(I)=M$ and 
$\bim N\cC N=\Rep^I(\A)$, the matrix $Z=(Z_\mu,\nu)_{\mu,\nu\in\Irr(\bim N\cC N)}$
 with $Z_{\mu,\nu}=  \dim\Hom(\alpha^+_\mu,\alpha^-_\nu)$ is a modular invariant \cite{BcEvKa1999}, \ie commutes
with the $S$ and $T$ associated with $\bim N\cC N$.
The modular invariant of a commutative $\ZZ_2$--simple current extension $\theta=[\rho_0]\oplus[\rho_g]$ is given by
(\cf (3.59) in \cite{FuRuSc2004} for the general formula)
$$
  Z_{i,j}=\frac12\left(1+\frac{\omega_{gi}}{\omega_i}\right)\left(\delta_{i,j}+\delta_{gi,j}\right)\,,
$$
where $gi$ is the action of $g$ on $i$, \ie $[\rho_{gi}]=[\rho_g]\times[\rho_i]$. We conveniently write the modular invariant in character form as:
$$
  Z=\sum_{\mu,\nu} Z_{\mu,\nu}\chi_\mu\bar\chi_\nu\,.
$$
We include the modular invariants, from which one can derive the fusion rules of the representation category. We note, although it is not necessary and follows from the above ``abstract non-sense'', one can directly check
that the, for example the representation category of the net $\A_{10}\otimes \A_{\Spin(11),1} \myrtimes_{[\rho_{10,2}]} \ZZ_2$
has the fusion rules of the $E_6$ double as in \cite{Iz2001II,HoRoWa2008}. Some of this calculation is contained in \cite{BcEvKa2001}.

\begin{example}[$A_{k+1}$-case] 
For the inclusion $\A_k\otimes \tilde \A_k\subset \A_{N\subset N}= (\A_k\otimes \tilde \A_k)\rtimes \ZZ_2$
has
the modular invariant is given by: 
$$
Z_{\rho_{i_1,j_1},\rho_{i_2,j_2}}
=\frac12\left(1+(-1)^{i_1-j_1}\right)\left(\delta_{i_1,i_2}\delta_{j_1,j_2}
  +\delta_{i_1,k-i_2}\delta_{j_1,k-j_2}\right)
$$
and thus
$$
  Z=\frac12\sum_{\substack{i,j=0 \\ i+j=\mathrm{even}}}^k |\chi_{\rho_{i,j}}+\chi_{\rho_{k-i,k-j}}|^2\,.
$$
\end{example}
\begin{example}[$D_{2n}$-case] 
Let $k=4n-4$. Let $\A_k$, then there is a simple current extension  
$\cB_k=\A_{k}\rtimes_{\rho_{k}} \ZZ_2$ of $\A_k$ corresponding to the Dynkin diagram $D_{2n}$ in Proposition \ref{prop:ADE}
  with modular invariant:
$$
  Z_{D_{2n}}=\frac12\sum_{\ell=0}^{\frac k2} |\chi_{2\ell}+\chi_{k-2l}|^2 \,.
$$
The same is true for $\tilde \cB_k=\tilde \A_{k}\rtimes_{\rho_{k}} \ZZ_2$.
The net $\A_{N\subset M}$ for $D_{2n}$ is just $\cB_k\otimes \tilde\cB_k$, which is an $\ZZ_2$ extension of 
$$\A_k\otimes \tilde \cB_k \subset \cB_k\otimes \tilde \cB_k
\supset \cB_k\otimes \tilde \A_k\,.$$ 
So the modular invariant for the $\ZZ_2$-simple current extension is
$Z_{D_{2n}}\otimes I_{n+1}$, where $I_m$ is the $m\times m$ identity matrix.
\end{example}

\begin{example}[$E_{6}$-cases] 
Then modular invariant for 
$\A_{\SU(2),{10}}\otimes\A_{\Spin(11),1}\subset \A_{N\subset M}$ for $E_6$
is given by:
 $$Z=X+Y +2  |\chi_{5,1}|^2\,,$$
 with
\begin{align*}
  X&= 
  |\chi_{0,0}+\chi_{10,2}|^2 +
  |\chi_{0,2}+\chi_{10,0}|^2 + 
  |\chi_{2,0}+\chi_{8,2}|^2 +
  |\chi_{0,2}+\chi_{8,0}|^2 +
  |\chi_{4,0}+\chi_{6,2}|^2 +
  |\chi_{4,2}+\chi_{6,0}|^2 
\\
  Y&=
  |\chi_{1,1}+\chi_{9,1}|^2 +
  |\chi_{3,1}+\chi_{7,1}|^2 \,.
\end{align*}
One can read of the number of irreducible sectors:
  $|\bim N\Delta N|=33$,
  $|\bim N\Delta M|=|\bim M{\Delta^\pm} M|=18$, 
  $|\bim M\Delta M|=36$ and 
  $|\bim M{\Delta^0} M|=10$.
The category $\bim N\cC N$ has $A_{11}\times A_{3}$ fusion rules, see Figure \ref{fig:A11A3}
and the $\ZZ_2$--simple current extension is an ``orbifold'' 
giving the fusion rules of $\bim[\pm] M\cC M$, Figure \ref{fig:E6DualFusionGraph}.
\begin{figure}
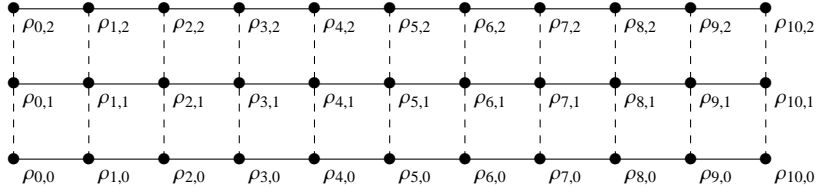

$$
  \tikzmath[1]{
    \foreach \k in {0,...,10}{
      \foreach \l in {0,1,2}{
        \node at (\k,\l) {$\bullet$};
        \node at (\k,\l) [below right] {$\scriptstyle{\rho_{\k,\l}}$};
      }
    }
    \foreach \k in {0,...,10}{
      \draw[dashed] (\k,0)--(\k,2);
    }
    \foreach \l in {0,1,2}{
      \draw (0,\l)--(10,\l);
    }
  }
$$
\caption{Fusion rules of $\SU(2)_{10}\times \Spin(11)_1$}
\label{fig:A11A3}
\end{figure}
\begin{figure}
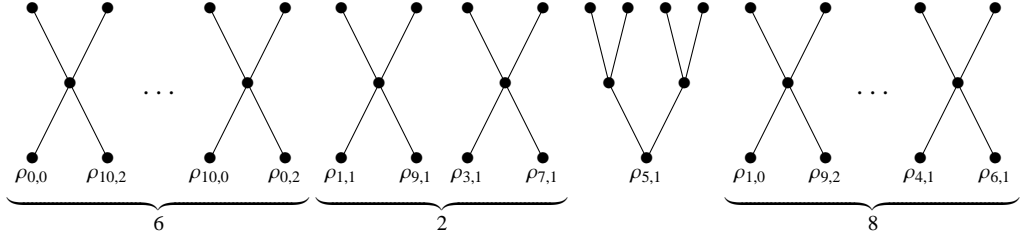

\label{fig:E6SimplecurrentPrincipalGraph}
$$
  \underbrace{
  \tikzmath[1]{
    \node at (0,0) {$\bullet$};
    \node at (1,0) {$\bullet$};
    \node at (.5,1) {$\bullet$};
    \node at (0,2) {$\bullet$};
    \node at (1,2) {$\bullet$};
    \node at (0,0) [below] {$\scriptstyle{\rho_{0,0}}$};
    \node at (1,0) [below] {$\scriptstyle{\rho_{10,2}}$};
    \draw (0,0)--(.5,1);
    \draw (1,0)--(.5,1);
    \draw (0,2)--(.5,1);
    \draw (1,2)--(.5,1);
  }
  \cdots
  \tikzmath[1]{
    \node at (0,0) {$\bullet$};
    \node at (1,0) {$\bullet$};
    \node at (.5,1) {$\bullet$};
    \node at (0,2) {$\bullet$};
    \node at (1,2) {$\bullet$};
    \node at (0,0) [below] {$\scriptstyle{\rho_{10,0}}$};
    \node at (1,0) [below] {$\scriptstyle{\rho_{0,2}}$};
    \draw (0,0)--(.5,1);
    \draw (1,0)--(.5,1);
    \draw (0,2)--(.5,1);
    \draw (1,2)--(.5,1);
  }
  }_6
  \underbrace{
  \tikzmath[1]{
    \node at (0,0) {$\bullet$};
    \node at (1,0) {$\bullet$};
    \node at (.5,1) {$\bullet$};
    \node at (0,2) {$\bullet$};
    \node at (1,2) {$\bullet$};
    \node at (0,0) [below] {$\scriptstyle{\rho_{1,1}}$};
    \node at (1,0) [below] {$\scriptstyle{\rho_{9,1}}$};
    \draw (0,0)--(.5,1);
    \draw (1,0)--(.5,1);
    \draw (0,2)--(.5,1);
    \draw (1,2)--(.5,1);
  }
  \tikzmath[1]{
    \node at (0,0) {$\bullet$};
    \node at (1,0) {$\bullet$};
    \node at (.5,1) {$\bullet$};
    \node at (0,2) {$\bullet$};
    \node at (1,2) {$\bullet$};
    \node at (0,0) [below] {$\scriptstyle{\rho_{3,1}}$};
    \node at (1,0) [below] {$\scriptstyle{\rho_{7,1}}$};
    \draw (0,0)--(.5,1);
    \draw (1,0)--(.5,1);
    \draw (0,2)--(.5,1);
    \draw (1,2)--(.5,1);
  }
  }_{2}
  \tikzmath[1]{
    \node at (.5,0) {$\bullet$};
    \node at (0,1) {$\bullet$};
    \node at (1,1) {$\bullet$};
    \node at (-.25,2) {$\bullet$};
    \node at (.25,2) {$\bullet$};
    \node at (.75,2) {$\bullet$};
    \node at (1.25,2) {$\bullet$};
    \node at (.5,0) [below] {$\scriptstyle{\rho_{5,1}}$};
    \draw (0.5,0)--(0,1);
    \draw (.5,0)--(1,1);
    \draw (0,1)--(-.25,2);
    \draw (0,1)--(.25,2);
    \draw (1,1)--(.75,2);
    \draw (1,1)--(1.25,2);
  }
  \underbrace{
  \tikzmath[1]{
    \node at (0,0) {$\bullet$};
    \node at (1,0) {$\bullet$};
    \node at (.5,1) {$\bullet$};
    \node at (0,2) {$\bullet$};
    \node at (1,2) {$\bullet$};
    \node at (0,0) [below] {$\scriptstyle{\rho_{1,0}}$};
    \node at (1,0) [below] {$\scriptstyle{\rho_{9,2}}$};
    \draw (0,0)--(.5,1);
    \draw (1,0)--(.5,1);
    \draw (0,2)--(.5,1);
    \draw (1,2)--(.5,1);
  }
  \cdots
  \tikzmath[1]{
    \node at (0,0) {$\bullet$};
    \node at (1,0) {$\bullet$};
    \node at (.5,1) {$\bullet$};
    \node at (0,2) {$\bullet$};
    \node at (1,2) {$\bullet$};
    \node at (0,0) [below] {$\scriptstyle{\rho_{4,1}}$};
    \node at (1,0) [below] {$\scriptstyle{\rho_{6,1}}$};
    \draw (0,0)--(.5,1);
    \draw (1,0)--(.5,1);
    \draw (0,2)--(.5,1);
    \draw (1,2)--(.5,1);
  }
  }_{8}
$$
\caption{(Dual) principal graph for $\A_{\SU(2),{10}}\otimes\A_{\Spin(11),1}\subset \A_{N\subset M}$} 
\end{figure}
\begin{figure}
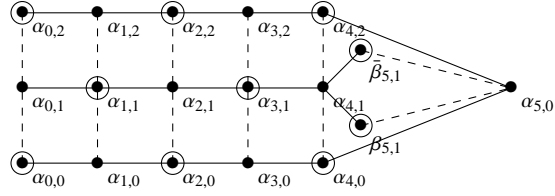

$$
  \tikzmath[1]{
    \foreach \k in {0,...,4}{
      \foreach \l in {0,1,2}{
        \pgfmathparse{Mod(\k+\l,2)==0?1:0}
         \ifnum\pgfmathresult>0
            \draw (\k,\l) circle (.15);
            \node at (\k,\l) {$\bullet$};
           \else
            \node at (\k,\l) {$\bullet$};
           \fi
           \node at (\k,\l) [below right] {$\scriptstyle{\alpha_{\k,\l}}$};
      }
    }
    \foreach \k in {0,...,4}{
      \draw[dashed] (\k,0)--(\k,2);
    }
    \foreach \l in {0,1,2}{
      \draw (0,\l)--(4,\l);
    }
    \node at (4.5,.5) {$\bullet$};
    \node at (4.5,.5) [below right] {$\scriptstyle{\beta_{5,1}}$};
    \node at (4.5,1.5) {$\bullet$};
    \node at (4.5,1.5) [below right] {$\scriptstyle{\bar\beta_{5,1}}$};
    \draw (4.5,.5)--(4,1)--(4.5,1.5); 
    \node at (6.5,1) {$\bullet$};
    \node at (6.5,1) [below right] {$\scriptstyle{\alpha_{5,0}}$};
    \draw[dashed] (4.5,.5)--(6.5,1)--(4.5,1.5); 
    \draw (4,0)--(6.5,1)--(4,2); 
    \draw (4.5,.5) circle (.15);
    \draw (4.5,1.5) circle (.15);
  }
$$
\caption{The fusion graph of $\bim[+]M\cC M$ for
 $\A_{\SU(2),{10}}\otimes\A_{\Spin(11),1}\subset \A_{N\subset M}$ for $E_6$}
\label{fig:E6DualFusionGraph}
\end{figure}
\end{example}

\begin{example}[$E_{8}$-cases] 
Note the net $\cB_{N\subset M}$ for the $E_8$ subfactor can be realized as $\A_{D_{16}}\otimes \A_{\Ffour,1}$, 
where we can replace $\A_{\Ffour1}$ by the $\tilde\cB\supset\tilde \A_{28}$ the 
extension from Proposition \ref{prop:MirrorExtension}
of $\cB=\A_{\Gtwo,1}\supset \tilde \A_{28}$.
The modular invariant of the inclusion 
$\A_{28}\otimes \A_{\Ffour,1} \subset \A_{D_{16}}\otimes \A_{\Ffour,1}$ is
$Z_{D_{12}}\otimes I_{2}$. 
\end{example}

\section{Categorical Picture and Vertex Operator Algebras}
\label{sec:VOA}
Local irreducible extensions $\cB\supset \A$ of completely rational nets are characterized by 
commutative Q-systems $\Theta\in  \Rep(\A)$ 
\cite{LoRe1995}
and the representation theory is given by the ambichiral sectors $\bim[0]M\cC M$.
The Q-system is a commutative (Frobenius) algebra in the braided tensor category $\Rep(\A)$.
Because $\Theta$ is commutative, the right-modules $\Mod(\Theta)=\cC_\Theta$, see \cite{KiOs2002}, form itself 
a tensor category. This category is equivalent with $\bim[+] M\cC M$. Interchanging the braiding, there is another 
tensor product under which $\Mod(\Theta)$ is equivalent with $\bim[-] M\cC M$.
The ambichiral sectors are braided equivalent  
$\bim[0]M\cC M$ with the category of local or dyslexic modules $\Mod_0(\Theta)$, see 
\cite{BiKaLo2014}.

The same categorical structure arises for extensions of vertex operator algebras.
\cite{KiOs2002, HuKiLe2014}. It follows:
\begin{prop}
\label{prop:VOAext}
 Let $\A$ be a completely rational conformal net and 
$V$ a vertex operator algebra, such that the category $\cC_V$ has a natural 
vertex tensor category structure (\cf \cite{HuKiLe2014}) and is braided equivalent to $\Rep(\A)$. 
Then for every local irreducible extension $\cB\supset\A$ there exists a vertex operator 
algebra $V_\cB\supset V$, whose category of modules is braided equivalent to $\Rep(\cB)$.
\end{prop}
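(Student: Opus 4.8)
The plan is to transport the algebraic datum classifying the extension $\cB\supset\A$ from $\Rep(\A)$ to $\cC_V$ along the given braided equivalence, and then feed it into the known dictionary between commutative algebra objects in a vertex tensor category and extensions of vertex operator algebras. First I would recall, as in the discussion preceding the statement and in \cite{LoRe1995,BiKaLo2014}, that an irreducible local extension $\cB\supset\A$ of a completely rational net amounts to the datum of a connected (haploid) commutative Q-system $\Theta=(\theta,w,x)$ in $\Rep(\A)$ with trivial twist on $\theta$; locality of $\cB$ is precisely the triviality of the ribbon twist $\omega_\theta$, equivalently that $\theta$ decomposes into sectors of integer conformal spin. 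Under this correspondence $\Rep(\cB)$ is braided equivalent to the category $\Mod_0(\Theta)$ of local (dyslexic) $\Theta$-modules, by \cite{KiOs2002} together with \cite{BiKaLo2014}.

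Next, let $F\colon\Rep(\A)\to\cC_V$ be the given braided tensor equivalence. Since both sides are unitary modular tensor categories with their canonical ribbon structures, $F$ is (or may be chosen) twist-preserving, so $A:=F(\Theta)$ is again a connected commutative Frobenius algebra in $\cC_V$, all of whose homogeneous summands have integer conformal weight. I would then invoke the results of Huang--Kirillov--Lepowsky \cite{HuKiLe2014} (building on \cite{KiOs2002}): such an algebra $A$ in the vertex tensor category $\cC_V$ is the same thing as a vertex operator algebra $V_\cB\supset V$ extending $V$ — the integrality of the weights being exactly what guarantees that one obtains an honest VOA and not merely a vertex operator superalgebra — and the braided category $\cC_{V_\cB}$ of $V_\cB$-modules is equivalent to the category $\Mod_0(A)$ of local $A$-modules. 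Because $F$ is a braided equivalence it restricts to an equivalence $\Mod_0(\Theta)\xrightarrow{\ \sim\ }\Mod_0(A)$, so chaining the equivalences gives
\[
  \cC_{V_\cB}\;\cong\;\Mod_0(A)\;\cong\;F(\Mod_0(\Theta))\;\cong\;\Mod_0(\Theta)\;\cong\;\Rep(\cB)
\]
as braided tensor categories, which is the assertion.

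The hard part will be the bookkeeping needed to make \cite{HuKiLe2014} applicable: one must know that $F$ genuinely respects the ribbon structure, so that the condition $\omega_\theta=\id$ transports to the integer-grading condition on $A$, and that the resulting $V_\cB$ is again rational and $C_2$-cofinite so that $\cC_{V_\cB}$ is honestly its module category rather than some larger category. The first point is automatic because a unitary modular tensor category carries an essentially unique unitary ribbon structure (on the net side the twist is pinned down by $\omega_\rho=\mathrm{e}^{2\pi\ima L_0}$ on the lowest-weight subspace), and the second follows from complete rationality of $\cB$ together with the standing hypotheses on $V$. The remaining verifications — that $A$ is haploid, associative and Frobenius, and that the passage to local modules intertwines the braidings — are routine transport-of-structure once the objects have been matched up.
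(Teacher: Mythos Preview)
Your proposal is correct and follows essentially the same route as the paper: the paper does not give a separate proof for this proposition but states it as an immediate consequence of the preceding paragraph, namely that a local irreducible extension $\cB\supset\A$ is encoded by a commutative Q-system $\Theta$ in $\Rep(\A)$ with $\Rep(\cB)\cong\Mod_0(\Theta)$, and that by \cite{KiOs2002,HuKiLe2014} the identical categorical correspondence holds on the VOA side, so one transports $\Theta$ along the braided equivalence. Your write-up simply makes explicit the bookkeeping (twist preservation, integrality of conformal weights, haploidness, and the identification $\cC_{V_\cB}\cong\Mod_0(A)$) that the paper leaves implicit in the phrase ``It follows''.
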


Using this proposition we can transport our result to vertex operator algebras. 
By 
\cite[Proposition 8.2.6]{FrKe1993} ribbon categories with $\SU(2)_k$ 
are determined by its twists which are given by the exponential of the conformal weights using \cite{GuLo1996}. The fusion rules calculated by \cite{Wa} coincide
with the one of the corresponding affine Kac--Moody VOA. Thus we can conclude that the 
modular tensor categories are equivalent.

For a VOA corresponding to the net $\A_k$, \ie a VOA which has the opposite braiding of $\SU(2)_k$, 
we could in principle apply Proposition \ref{prop:VOAext}, but we do not know that the 
categories for the Virasoro minimal models are equivalent for VOAs and conformal nets.

But we can argue as follows. Let $V_k=V_{\SU(2)_k}$ be the vertex operator algebra of affine Kac-Moody algebra $\hat{\mathfrak{sl}}_2$
at level $k$. As in Proposition \ref{prop:Akmirror} we get an inclusion into $V_{E_8}^{\otimes k}$, where 
$V_{E_8}$ is the vertex operator algebra associated with the even lattice $E_8$, which coincides by the Kac--Frenkel construction with the 
affine Kac-Moody algebra of the Lie algebra $E_8$ at level 1. Let $\tilde V_k$ be the coset of the inclusion $V_k\subset V_{E_8}^{\otimes k}$. Then 
$V_{E_8}^{\otimes k}$ decomposes as 
$$
  \bigoplus Z_{kl} M_k\otimes \tilde M_l  \,,
$$
where $M_k$ are modules of $V_{k}$ and $\tilde M_l$ of the coset net $\tilde V_l$. It is $Z_{k0}=\delta_{k,0}$ and $Z_{0l}=\delta_{l,0}$.
We call such an inclusion of $V_k\subset V_{E_8}^{\otimes k}$ normal.
By the same argument as in Proposition \ref{prop:VOAext}
 the analogue of Proposition \ref{prop:Mirror} holds using the same proof and $\tilde V_k$ has as representation category $\SU(2)_k$
with the opposite braiding.

Then Corollary \ref{cor:Doubles} together with Proposition \ref{prop:VOAext} gives:
\begin{prop} 
There is a unitary rational VOA $\tilde V_k$ which has the opposite braiding of $\SU(2)_k$.

For every subfactor $[M:N]<4$ there is a unitary rational VOA $V_{N\subset M}$,   
whose category of modules is equivalent to the quantum double $D(N\subset M)$ of the subfactor $N\subset M$, \ie 
the Drinfeld center of the fusion category of the even part of $N\subset M$. 
\end{prop}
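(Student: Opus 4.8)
The plan is to assemble the statement from the pieces already developed. The first sentence—existence of a unitary rational VOA $\tilde V_k$ with the opposite braiding of $\SU(2)_k$—is essentially proven in the paragraph immediately preceding the statement: one takes $V_k = V_{\SU(2)_k}$, embeds it into $V_{E_8}^{\otimes k}$ by mimicking the chain of conformal inclusions used for nets in Proposition \ref{prop:Akmirror} (using $V_{E_8}$ from the Frenkel--Kac construction, the coset VOA $V_k \subset V_{E_8}^{\otimes k}$, and the VOA analogue of Proposition \ref{prop:Mirror}), and sets $\tilde V_k$ to be this coset. I would simply restate that $\tilde V_k$ is rational (it is an intermediate VOA between rational VOAs, exactly as the net $\tilde\A_k$ is intermediate between completely rational nets) and unitary (all the building blocks are unitary VOAs and unitarity passes to cosets and extensions), and that its module category is $\rev{\SU(2)_k}$ as a modular tensor category by the normality argument $Z_{k0} = \delta_{k,0}$, $Z_{0l} = \delta_{l,0}$.

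For the second paragraph of the statement, I would run Corollary \ref{cor:Doubles} through Proposition \ref{prop:VOAext}. Concretely: for each $G \in \{A_n, D_{2n}, E_{6,8}, \bar E_{6,8}\}$, Corollary \ref{cor:Doubles} exhibits the net $\A_{N\subset M}$ realizing $D(N\subset M)$ as a local irreducible extension of a tensor product of the form $\A_k \otimes \tilde\A_k$ (or of an extension thereof, or of $\A_{28}\otimes\A_{\Ffour,1}$, etc.). All of these base nets have a VOA counterpart with braided-equivalent module category: for $\A_k$ this is $V_{\SU(2)_k}$ (the module categories agree because, by \cite[Proposition 8.2.6]{FrKe1993}, a ribbon category with $\SU(2)_k$ fusion rules is pinned down by its twists, which are the exponentials of the conformal weights on both sides by \cite{GuLo1996}, and the fusion rules computed in \cite{Wa} match the affine VOA fusion rules); for $\tilde\A_k$ it is $\tilde V_k$ just constructed; for the lattice-type and conformal-inclusion nets appearing in the $E_6, E_8$ cases ($\A_{\Spin(11),1}$, $\A_{\Ffour,1}$, $\A_{\Gtwo,1}$, and the mirror extensions of $\A_{10}\subset\A_{\Spin(5),1}$ and $\A_{28}\subset\A_{\Gtwo,1}$), I would note that each is itself a local extension of a product of $\SU(2)$-type or lattice nets, so by applying Proposition \ref{prop:VOAext} first to build the VOA analogue of that extension and then again to the $\ZZ_2$-simple current extension, one obtains $V_{N\subset M}$ with $\Rep(V_{N\subset M}) \cong \Rep(\A_{N\subset M}) \cong D(N\subset M)$ as braided (indeed modular) tensor categories. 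Unitarity and rationality of $V_{N\subset M}$ follow because Proposition \ref{prop:VOAext} produces $V_\cB \supset V$ as a (necessarily rational, and unitary) extension inside a unitary vertex tensor category.

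The main obstacle—and the reason the argument for $\tilde V_k$ is phrased via the lattice embedding rather than via Proposition \ref{prop:VOAext} applied to a putative VOA for the net $\A_k$—is the matching of module categories for the intermediate pieces, in particular the Virasoro minimal model VOAs $\Vir_{c_m}$, where it is not a priori known that the VOA module category and the conformal-net representation category are braided equivalent. I would emphasize that this is circumvented precisely by working with the coset $\tilde V_k \subset V_{E_8}^{\otimes k}$ directly, since $V_{E_8}$ and $V_{\SU(2)_k}$ are both understood (lattice and affine constructions) and the normality conditions $Z_{k0}=\delta_{k,0}$, $Z_{0l}=\delta_{l,0}$ give the mirror statement without ever needing to identify the Virasoro cosets. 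The remaining work is bookkeeping: verifying that every net appearing in Corollary \ref{cor:Doubles} is, level by level, a local extension of products of $\SU(2)_k$-type, lattice, and conformal-inclusion nets for which the VOA/net dictionary is available, so that Proposition \ref{prop:VOAext} can be applied in finitely many steps. I expect no genuine difficulty beyond this, so the proof is, as the excerpt suggests, a short assembly of Corollary \ref{cor:Doubles}, Proposition \ref{prop:VOAext}, and the explicit mirror VOA $\tilde V_k$.
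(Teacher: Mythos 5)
Your proposal is correct and follows essentially the same route as the paper: the mirror VOA $\tilde V_k$ is obtained exactly as in the text preceding the statement, as the coset of $V_k\subset V_{E_8}^{\otimes k}$ with the normality conditions $Z_{k0}=\delta_{k,0}$, $Z_{0l}=\delta_{0,l}$ feeding the VOA analogue of Proposition \ref{prop:Mirror}, and the second assertion is then the combination of Corollary \ref{cor:Doubles} with Proposition \ref{prop:VOAext}, using the twist/fusion-rule matching of \cite[Proposition 8.2.6]{FrKe1993} and \cite{GuLo1996} to identify the base categories. You also correctly pinpoint the one genuine subtlety (the unresolved comparison for Virasoro minimal model categories) and the same workaround the paper uses, so nothing is missing.
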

\begin{rmk} For the construction of $\tilde V_k$ and $V_{N\subset M}$ we could also use directly the correspondence 
between conformal nets and vertex operator algebras in \cite{CaKaLoWi2015}. We still have to use the categorical arguments
to show that the corresponding representations categories are equivalent. It would be nice to have a result that states 
that the representation categories of $V$ and $\A_V$ are the same.
\end{rmk}
\begin{example} 
Let $V$ be the vertex operator algebra obtained by $\ZZ_2$--simple current extension 
$\hat{\mathfrak{sl}}_{2,10}\otimes \hat{\mathfrak{so}}_{11,1}$.
Then the category modules of $V$ 
is equivalent to $Z(\frac12E_6)$, the quantum double of the $E_6$ subfactor.
\end{example}

\section{Conclusions and Outlook}
We gave some structural results of completely rational conformal nets whose representation category is a quantum double (Drinfeld center of 
a unitary fusion category). 
We showed that the quantum doubles of subfactors with index less than 4, or equivalently the Drinfeld centers of 
their even part fusion categories, are realized as representation theories in chiral conformal field theory,
either as a conformal net of von Neumann algebras or as VOAs.
 The most interesting is the realization of the quantum double of $E_6$ (or $\bar E_6$) 
as a $\ZZ_2$-simple current extension of $\SU(2)_{10}\times \Spin(11)_1$. In particular,
\cite{HoRoWa2008} it was shown that the quantum double of $E_6$ is universal for topological quantum computing.
On the other hand, it was proposed in the same article that it might be exotic. Our construction shows that it is indeed not exotic.
This example was the main motivation of the article, because no direct realization in conformal field theory or
quantum groups is contained in the literature. Further, the even part of $E_6$ is the smallest non-trivial fusion category
\cite{Os2013} in the sense that it is not braided or coming from groups. Drinfeld centers of braided fusion categories and groups are 
easy. Despite the fact that the even part of $E_6$ has no braiding, the realization as a CFT is 
still very easy.

We conjecture that the double of $E_6$ is also related to Chern--Simons theory  with non-simply connected 
gauge group $(\SU(2)\times \Spin(11)) /\ZZ_2$. It is also related to the 
$\SU(2)_{10}\times \Spin(11)_1$ quantum group as a kind of quantum subgroup. Indeed the $\ZZ_2$-simple current extension
correspond to a quantum subgroup in the sense of Ocneanu \cite{Oc2001}.

It would be interesting to find realizations of the doubles of exotic subfactors, like the Haagerup subfactor using similar 
methods like here.

\def\cprime{$'$}
\begin{bibdiv}
\begin{biblist}

\bib{Bc1996}{techreport}{
      author={Böckenhauer, Jens},
       title={{An Algebraic Formulation of Level One Wess-Zumino-Witten
  Models}},
        date={1996},
      volume={8},
      number={DESY 95-138},
         url={http://arxiv.org/abs/hep-th/9507047},
}

\bib{BcEv1998}{article}{
      author={Böckenhauer, J.},
      author={Evans, D.~E.},
       title={{Modular invariants, graphs and {$\alpha$}-induction for nets of
  subfactors. {I}}},
        date={1998},
        ISSN={0010-3616},
     journal={Comm. Math. Phys.},
      volume={197},
      number={2},
       pages={361–386},
      eprint={arXiv:hep-th/9801171},
         url={http://dx.doi.org/10.1007/s002200050455},
      review={\MR{1652746 (2000c:46121)}},
}

\bib{BcEvKa2000}{article}{
      author={Böckenhauer, Jens},
      author={Evans, David~E.},
      author={Kawahigashi, Yasuyuki},
       title={{Chiral structure of modular invariants for subfactors}},
        date={2000},
        ISSN={0010-3616},
     journal={Comm. Math. Phys.},
      volume={210},
      number={3},
       pages={733–784},
         url={http://dx.doi.org/10.1007/s002200050798},
      review={\MR{1777347 (2001k:46097)}},
}

\bib{BcEvKa2001}{article}{
      author={Böckenhauer, Jens},
      author={Evans, David~E.},
      author={Kawahigashi, Yasuyuki},
       title={Longo-{R}ehren subfactors arising from {$\alpha$}-induction},
        date={2001},
        ISSN={0034-5318},
     journal={Publ. Res. Inst. Math. Sci.},
      volume={37},
      number={1},
       pages={1\ndash 35},
         url={http://projecteuclid.org/euclid.prims/1145476688},
      review={\MR{1815993 (2002d:46053)}},
}

\bib{BcEvKa1999}{article}{
      author={Böckenhauer, Jens},
      author={Evans, David~E.},
      author={Kawahigashi, Yasuyuki},
       title={{On {$\alpha$}-induction, chiral generators and modular
  invariants for subfactors}},
        date={1999},
        ISSN={0010-3616},
     journal={Comm. Math. Phys.},
      volume={208},
      number={2},
       pages={429–487},
         url={http://dx.doi.org/10.1007/s002200050765},
      review={\MR{1729094 (2001c:81180)}},
}

\bib{Bi2012}{article}{
      author={Bischoff, Marcel},
       title={{Models in Boundary Quantum Field Theory Associated with Lattices
  and Loop Group Models}},
        date={2012},
        ISSN={0010-3616},
     journal={Comm. Math. Phys.},
       pages={1–32},
      eprint={arXiv:1108.4889v1 [math-ph]},
         url={http://dx.doi.org/10.1007/s00220-012-1511-2},
        note={10.1007/s00220-012-1511-2},
}

\bib{BiKaLo2014}{misc}{
      author={Bischoff, Marcel},
      author={Kawahigashi, Yasuyuki},
      author={Longo, Roberto},
       title={{Characterization of 2D rational local conformal nets and its
  boundary conditions: the maximal case}},
        date={2014},
}

\bib{BiKaLoRe2014}{article}{
      author={Bischoff, Marcel},
      author={Kawahigashi, Yasuyuki},
      author={Longo, Roberto},
      author={Rehren, Karl-Henning},
       title={{Phase boundaries in algebraic conformal QFT}},
        date={2014-05},
      eprint={arxiv:1405.7863v1 [math-ph]},
         url={http://arxiv.org/abs/1405.7863v1},
}

\bib{BiKaLoRe2014-2}{book}{
      author={Bischoff, Marcel},
      author={Kawahigashi, Yasuyuki},
      author={Longo, Roberto},
      author={Rehren, Karl-Henning},
       title={Tensor categories and endomorphisms of von neumann algebras: with
  applications to quantum field theory},
      series={SpringerBriefs in Mathematical Physics},
   publisher={Springer},
        date={2015},
      volume={3},
         url={http://arxiv.org/abs/1407.4793},
}

\bib{CaKaLoWi2015}{article}{
      author={Carpi, Sebastiano},
      author={Kawahigashi, Yasuyuki},
      author={Longo, Roberto},
      author={Weiner, Mih{\'a}ly},
       title={From vertex operator algebras to conformal nets and back},
        date={2015},
     journal={arXiv preprint arXiv:1503.01260},
}

\bib{CaMoSn2011}{article}{
      author={Calegari, Frank},
      author={Morrison, Scott},
      author={Snyder, Noah},
       title={Cyclotomic integers, fusion categories, and subfactors},
        date={2011},
        ISSN={0010-3616},
     journal={Comm. Math. Phys.},
      volume={303},
      number={3},
       pages={845\ndash 896},
         url={http://dx.doi.org/10.1007/s00220-010-1136-2},
      review={\MR{2786219 (2012e:18013)}},
}

\bib{DrGeNiOs2010}{article}{
      author={Drinfeld, Vladimir},
      author={Gelaki, Shlomo},
      author={Nikshych, Dmitri},
      author={Ostrik, Victor},
       title={On braided fusion categories. {I}},
        date={2010},
        ISSN={1022-1824},
     journal={Selecta Math. (N.S.)},
      volume={16},
      number={1},
       pages={1\ndash 119},
         url={http://dx.doi.org/10.1007/s00029-010-0017-z},
      review={\MR{2609644 (2011e:18015)}},
}

\bib{DoHaRo1971}{article}{
      author={Doplicher, Sergio},
      author={Haag, Rudolf},
      author={Roberts, John~E.},
       title={Local observables and particle statistics. {I}},
        date={1971},
        ISSN={0010-3616},
     journal={Comm. Math. Phys.},
      volume={23},
       pages={199\ndash 230},
      review={\MR{0297259 (45 \#6316)}},
}

\bib{DaMgNiOs2013}{article}{
      author={Davydov, Alexei},
      author={Müger, Michael},
      author={Nikshych, Dmitri},
      author={Ostrik, Victor},
       title={{The {W}itt group of non-degenerate braided fusion categories}},
        date={2013},
        ISSN={0075-4102},
     journal={J. Reine Angew. Math.},
      volume={677},
       pages={135–177},
      review={\MR{3039775}},
}

\bib{DoPi2002}{article}{
      author={Doplicher, Sergio},
      author={Piacitelli, Gherardo},
       title={Any compact group is a gauge group},
        date={2002},
        ISSN={0129-055X},
     journal={Rev. Math. Phys.},
      volume={14},
      number={7-8},
       pages={873\ndash 885},
         url={http://dx.doi.org/10.1142/S0129055X02001430},
        note={Dedicated to Professor Huzihiro Araki on the occasion of his 70th
  birthday},
      review={\MR{1932669 (2003g:81118)}},
}

\bib{EvGa2011}{article}{
      author={Evans, David~E.},
      author={Gannon, Terry},
       title={{The exoticness and realisability of twisted {H}aagerup-{I}zumi
  modular data}},
        date={2011},
        ISSN={0010-3616},
     journal={Comm. Math. Phys.},
      volume={307},
      number={2},
       pages={463–512},
         url={http://dx.doi.org/10.1007/s00220-011-1329-3},
      review={\MR{2837122 (2012m:17040)}},
}

\bib{EvKa1998}{book}{
      author={Evans, David~E.},
      author={Kawahigashi, Yasuyuki},
       title={{Quantum symmetries on operator algebras}},
      series={{Oxford Mathematical Monographs}},
   publisher={The Clarendon Press Oxford University Press},
     address={New York},
        date={1998},
        ISBN={0-19-851175-2},
        note={Oxford Science Publications},
      review={\MR{1642584 (99m:46148)}},
}

\bib{FrKe1993}{book}{
      author={Fr{\"o}hlich, J{\"u}rg},
      author={Kerler, Thomas},
       title={Quantum groups, quantum categories and quantum field theory},
      series={Lecture Notes in Mathematics},
   publisher={Springer-Verlag, Berlin},
        date={1993},
      volume={1542},
        ISBN={3-540-56623-6},
      review={\MR{1239440 (95f:81042)}},
}

\bib{FuRuSc2004}{article}{
      author={Fuchs, J{\"u}rgen},
      author={Runkel, Ingo},
      author={Schweigert, Christoph},
       title={T{FT} construction of {RCFT} correlators. {III}. {S}imple
  currents},
        date={2004},
        ISSN={0550-3213},
     journal={Nuclear Phys. B},
      volume={694},
      number={3},
       pages={277\ndash 353},
         url={http://dx.doi.org/10.1016/j.nuclphysb.2004.05.014},
      review={\MR{2076134 (2005e:81209)}},
}

\bib{FrReSc1989}{article}{
      author={Fredenhagen, K.},
      author={Rehren, K.-H.},
      author={Schroer, B.},
       title={{Superselection sectors with braid group statistics and exchange
  algebras. {I}.\ {G}eneral theory}},
        date={1989},
        ISSN={0010-3616},
     journal={Comm. Math. Phys.},
      volume={125},
      number={2},
       pages={201–226},
         url={http://projecteuclid.org/getRecord?id=euclid.cmp/1104179464},
      review={\MR{1016869 (91c:81047)}},
}

\bib{GoHaJo1989}{book}{
      author={Goodman, Frederick~M.},
      author={de~la Harpe, Pierre},
      author={Jones, Vaughan F.~R.},
       title={{Coxeter graphs and towers of algebras}},
      series={{Mathematical Sciences Research Institute Publications}},
   publisher={Springer-Verlag},
     address={New York},
        date={1989},
      volume={14},
        ISBN={0-387-96979-9},
         url={http://dx.doi.org/10.1007/978-1-4613-9641-3},
      review={\MR{999799 (91c:46082)}},
}

\bib{GuLo1996}{article}{
      author={Guido, Daniele},
      author={Longo, Roberto},
       title={The conformal spin and statistics theorem},
        date={1996},
        ISSN={0010-3616},
     journal={Comm. Math. Phys.},
      volume={181},
      number={1},
       pages={11\ndash 35},
         url={http://projecteuclid.org/euclid.cmp/1104287623},
      review={\MR{1410566 (98c:81121)}},
}

\bib{Ha1994}{incollection}{
      author={Haagerup, Uffe},
       title={{Principal graphs of subfactors in the index range
  {$4<[M:N]<3+\sqrt2$}}},
        date={1994},
   booktitle={{Subfactors ({K}yuzeso, 1993)}},
   publisher={World Sci. Publ., River Edge, NJ},
       pages={1–38},
      review={\MR{1317352 (96d:46081)}},
}

\bib{Ha}{book}{
      author={Haag, Rudolf},
       title={{Local quantum physics}},
   publisher={Springer Berlin},
        date={1996},
}

\bib{HuKiLe2014}{misc}{
      author={Huang, Yi-Zhi},
      author={Kirillov, Alexander~Jr.},
      author={Lepowsky, James},
       title={Braided tensor categories and extensions of vertex operator
  algebras},
        date={2014},
         url={http://arxiv.org/abs/1406.3420},
}

\bib{HoRoWa2008}{article}{
      author={Hong, Seung-Moon},
      author={Rowell, Eric},
      author={Wang, Zhenghan},
       title={{On exotic modular tensor categories}},
        date={2008},
        ISSN={0219-1997},
     journal={Commun. Contemp. Math.},
      volume={10},
      number={suppl. 1},
       pages={1049–1074},
         url={http://dx.doi.org/10.1142/S0219199708003162},
      review={\MR{2468378 (2009j:18005)}},
}

\bib{HaYa2000}{article}{
      author={Hayashi, Tomohiro},
      author={Yamagami, Shigeru},
       title={Amenable tensor categories and their realizations as {AFD}
  bimodules},
        date={2000},
        ISSN={0022-1236},
     journal={J. Funct. Anal.},
      volume={172},
      number={1},
       pages={19\ndash 75},
         url={http://dx.doi.org/10.1006/jfan.1999.3521},
      review={\MR{1749868 (2001d:46092)}},
}

\bib{Iz2000}{article}{
      author={Izumi, Masaki},
       title={{The Structure of Sectors Associated with Longo–Rehren
  Inclusions\\I. General Theory}},
        date={2000},
        ISSN={0010-3616},
     journal={Comm. Math. Phys.},
      volume={213},
       pages={127–179},
         url={http://dx.doi.org/10.1007/s002200000234},
}

\bib{Iz2001II}{article}{
      author={Izumi, Masaki},
       title={The structure of sectors associated with {L}ongo-{R}ehren
  inclusions. {II}. {E}xamples},
        date={2001},
        ISSN={0129-055X},
     journal={Rev. Math. Phys.},
      volume={13},
      number={5},
       pages={603\ndash 674},
         url={http://dx.doi.org/10.1142/S0129055X01000818},
      review={\MR{1832764 (2002k:46161)}},
}

\bib{Jo2014}{article}{
      author={Jones, Vaughan~F.R.},
       title={Some unitary representations of {T}hompson's groups {$F$} and
  {$T$}},
        date={2014},
     journal={arXiv preprint arXiv:1412.7740},
}

\bib{Jo1983}{article}{
      author={Jones, V. F.~R.},
       title={{Index for subfactors}},
        date={1983},
        ISSN={0020-9910},
     journal={Invent. Math.},
      volume={72},
      number={1},
       pages={1–25},
         url={http://dx.doi.org/10.1007/BF01389127},
      review={\MR{696688 (84d:46097)}},
}

\bib{Ka2015}{article}{
      author={Kawahigashi, Yasuyuki},
       title={A remark on gapped domain walls between topological phases},
        date={2015},
     journal={arXiv preprint arXiv:1504.01088},
}

\bib{KaLo2004}{article}{
      author={Kawahigashi, Y.},
      author={Longo, Roberto},
       title={{Classification of local conformal nets. Case {$c < 1$}.}},
        date={2004},
        ISSN={0003-486X},
     journal={Ann. Math.},
      volume={160},
      number={2},
       pages={493–522},
}

\bib{KaLoMg2001}{article}{
      author={Kawahigashi, Y.},
      author={Longo, Roberto},
      author={Müger, Michael},
       title={{Multi-Interval Subfactors and Modularity of Representations in
  Conformal Field Theory}},
        date={2001},
     journal={Comm. Math. Phys.},
      volume={219},
       pages={631–669},
      eprint={arXiv:math/9903104},
}

\bib{KiOs2002}{article}{
      author={Kirillov, Jr.~Alexander},
      author={Ostrik, Viktor},
       title={{On a {$q$}-analogue of the {M}c{K}ay correspondence and the
  {ADE} classification of {$\germ {sl}\_2$} conformal field theories}},
        date={2002},
        ISSN={0001-8708},
     journal={Adv. Math.},
      volume={171},
      number={2},
       pages={183–227},
         url={http://dx.doi.org/10.1006/aima.2002.2072},
      review={\MR{1936496 (2003j:17019)}},
}

\bib{Ko1986}{article}{
      author={Kosaki, Hideki},
       title={{Extension of {J}ones' theory on index to arbitrary factors}},
        date={1986},
        ISSN={0022-1236},
     journal={J. Funct. Anal.},
      volume={66},
      number={1},
       pages={123–140},
         url={http://dx.doi.org/10.1016/0022-1236(86)90085-6},
      review={\MR{829381 (87g:46093)}},
}

\bib{Lo2003}{article}{
      author={Longo, Roberto},
       title={{Conformal Subnets and Intermediate Subfactors}},
        date={2003},
        ISSN={0010-3616},
     journal={Comm. Math. Phys.},
      volume={237},
       pages={7–30},
      eprint={arXiv:math/0102196v2 [math.OA]},
         url={http://dx.doi.org/10.1007/s00220-003-0814-8},
}

\bib{LoRe1995}{article}{
      author={Longo, Roberto},
      author={Rehren, Karl-Henning},
       title={{Nets of Subfactors}},
        date={1995},
     journal={Rev. Math. Phys.},
      volume={7},
       pages={567–597},
      eprint={arXiv:hep-th/9411077},
}

\bib{LoRo1997}{article}{
      author={Longo, R.},
      author={Roberts, J.~E.},
       title={{A theory of dimension}},
        date={1997},
        ISSN={0920-3036},
     journal={K-Theory},
      volume={11},
      number={2},
       pages={103–159},
      eprint={arXiv:funct-an/9604008v1},
         url={http://dx.doi.org/10.1023/A:1007714415067},
      review={\MR{1444286 (98i:46065)}},
}

\bib{Mg2003}{article}{
      author={Müger, Michael},
       title={{From subfactors to categories and topology. {I}. {F}robenius
  algebras in and {M}orita equivalence of tensor categories}},
        date={2003},
        ISSN={0022-4049},
     journal={J. Pure Appl. Algebra},
      volume={180},
      number={1-2},
       pages={81–157},
         url={http://dx.doi.org/10.1016/S0022-4049(02)00247-5},
      review={\MR{1966524 (2004f:18013)}},
}

\bib{Mg2003II}{article}{
      author={Müger, Michael},
       title={{From subfactors to categories and topology. {II}. {T}he quantum
  double of tensor categories and subfactors}},
        date={2003},
        ISSN={0022-4049},
     journal={J. Pure Appl. Algebra},
      volume={180},
      number={1-2},
       pages={159–219},
         url={http://dx.doi.org/10.1016/S0022-4049(02)00248-7},
      review={\MR{1966525 (2004f:18014)}},
}

\bib{Mg2005}{article}{
      author={Müger, Michael},
       title={{Conformal Orbifold Theories and Braided Crossed G-Categories}},
        date={2005},
        ISSN={0010-3616},
     journal={Comm. Math. Phys.},
      volume={260},
       pages={727–762},
         url={http://dx.doi.org/10.1007/s00220-005-1291-z},
}

\bib{Mg2010}{inproceedings}{
      author={Müger, Michael},
       title={{On superselection theory of quantum fields in low dimensions}},
        date={2010},
   booktitle={{X{VI}th {I}nternational {C}ongress on {M}athematical
  {P}hysics}},
   publisher={World Sci. Publ., Hackensack, NJ},
       pages={496–503},
         url={http://dx.doi.org/10.1142/9789814304634_0041},
      review={\MR{2730815 (2012i:81165)}},
}

\bib{Ma2000}{article}{
      author={Masuda, Toshihiko},
       title={Generalization of {L}ongo-{R}ehren construction to subfactors of
  infinite depth and amenability of fusion algebras},
        date={2000},
        ISSN={0022-1236},
     journal={J. Funct. Anal.},
      volume={171},
      number={1},
       pages={53\ndash 77},
         url={http://dx.doi.org/10.1006/jfan.1999.3523},
      review={\MR{1742858 (2001f:46093)}},
}

\bib{MoSe1990}{incollection}{
      author={Moore, Gregory},
      author={Seiberg, Nathan},
       title={{Lectures on {RCFT}}},
        date={1990},
   booktitle={{Superstrings '89 ({T}rieste, 1989)}},
   publisher={World Sci. Publ., River Edge, NJ},
       pages={1–129},
      review={\MR{1159969 (93m:81133a)}},
}

\bib{Oc2001}{incollection}{
      author={Ocneanu, Adrian},
       title={Operator algebras, topology and subgroups of quantum
  symmetry---construction of subgroups of quantum groups},
        date={2001},
   booktitle={Taniguchi {C}onference on {M}athematics {N}ara '98},
      series={Adv. Stud. Pure Math.},
      volume={31},
   publisher={Math. Soc. Japan, Tokyo},
       pages={235\ndash 263},
      review={\MR{1865095 (2002j:57059)}},
}

\bib{Oc1988}{incollection}{
      author={Ocneanu, Adrian},
       title={Quantized groups, string algebras and {G}alois theory for
  algebras},
        date={1988},
   booktitle={Operator algebras and applications, {V}ol.\ 2},
      series={London Math. Soc. Lecture Note Ser.},
      volume={136},
   publisher={Cambridge Univ. Press, Cambridge},
       pages={119\ndash 172},
      review={\MR{996454 (91k:46068)}},
}

\bib{Os2013}{article}{
      author={Ostrik, Victor},
       title={Pivotal fusion categories of rank 3 (with an appendix written
  jointly with dmitri nikshych)},
        date={2013},
     journal={arXiv preprint arXiv:1309.4822},
}

\bib{Po1994-2}{article}{
      author={Popa, Sorin},
       title={Classification of amenable subfactors of type {II}},
        date={1994},
        ISSN={0001-5962},
     journal={Acta Math.},
      volume={172},
      number={2},
       pages={163\ndash 255},
         url={http://dx.doi.org/10.1007/BF02392646},
      review={\MR{1278111 (95f:46105)}},
}

\bib{Po1994}{article}{
      author={Popa, Sorin},
       title={Symmetric enveloping algebras, amenability and {AFD} properties
  for subfactors},
        date={1994},
        ISSN={1073-2780},
     journal={Math. Res. Lett.},
      volume={1},
      number={4},
       pages={409\ndash 425},
         url={http://dx.doi.org/10.4310/MRL.1994.v1.n4.a2},
      review={\MR{1302385 (95i:46095)}},
}

\bib{Po1993}{book}{
      author={Popa, Sorin},
       title={Classification of subfactors and their endomorphisms},
      series={CBMS Regional Conference Series in Mathematics},
   publisher={Published for the Conference Board of the Mathematical Sciences,
  Washington, DC; by the American Mathematical Society, Providence, RI},
        date={1995},
      volume={86},
        ISBN={0-8218-0321-2},
      review={\MR{1339767 (96d:46085)}},
}

\bib{Re2000}{article}{
      author={Rehren, K.-H.},
       title={{Canonical tensor product subfactors}},
        date={2000},
        ISSN={0010-3616},
     journal={Comm. Math. Phys.},
      volume={211},
      number={2},
       pages={395–406},
         url={http://dx.doi.org/10.1007/s002200050818},
      review={\MR{1754521 (2001d:46093)}},
}

\bib{Re1989}{incollection}{
      author={Rehren, Karl-Henning},
       title={Braid group statistics and their superselection rules},
        date={1990},
   booktitle={The algebraic theory of superselection sectors ({P}alermo,
  1989)},
   publisher={World Sci. Publ., River Edge, NJ},
       pages={333\ndash 355},
      review={\MR{1147467}},
}

\bib{Sc2001}{article}{
      author={Schauenburg, Peter},
       title={The monoidal center construction and bimodules},
        date={2001},
        ISSN={0022-4049},
     journal={J. Pure Appl. Algebra},
      volume={158},
      number={2-3},
       pages={325\ndash 346},
         url={http://dx.doi.org/10.1016/S0022-4049(00)00040-2},
      review={\MR{1822847 (2002f:18013)}},
}

\bib{Wa}{article}{
      author={Wassermann, Antony},
       title={{Operator algebras and conformal field theory III. Fusion of
  positive energy representations of LSU(N) using bounded operators}},
        date={1998},
     journal={Invent. Math.},
      volume={133},
      number={3},
       pages={467–538},
      eprint={arXiv:math/9806031v1 [math.OA]},
}

\bib{Xu2000}{article}{
      author={Xu, Feng},
       title={{Jones-{W}assermann subfactors for disconnected intervals}},
        date={2000},
        ISSN={0219-1997},
     journal={Commun. Contemp. Math.},
      volume={2},
      number={3},
       pages={307–347},
      eprint={arXiv:q-alg/9704003},
         url={http://dx.doi.org/10.1142/S0219199700000153},
      review={\MR{1776984 (2001f:46094)}},
}

\bib{XuUnpublished}{misc}{
      author={Xu, Feng},
       title={Unpublished note},
        date={2001},
        note={As cited in appendix [CMS11]},
}

\bib{Xu2007}{article}{
      author={Xu, Feng},
       title={{Mirror extensions of local nets}},
        date={2007},
        ISSN={0010-3616},
     journal={Comm. Math. Phys.},
      volume={270},
      number={3},
       pages={835–847},
         url={http://dx.doi.org/10.1007/s00220-006-0184-0},
      review={\MR{2276468 (2008f:81148)}},
}

\bib{Xu2009}{article}{
      author={Xu, Feng},
       title={{On Affine Orbifold Nets Associated with Outer Automorphisms}},
        date={2009},
        ISSN={0010-3616},
     journal={Comm. Math. Phys.},
      volume={291},
       pages={845–861},
      eprint={arXiv:1002.2710v1 [math.OA]},
         url={http://dx.doi.org/10.1007/s00220-009-0763-y},
}

\end{biblist}
\end{bibdiv}

\address
\end{document}